% This paper can be formatted using the peerreviewca
% (instead of conference) mode.
\documentclass[draftcls, onecolumn, 11pt]{IEEEtran}
%\documentclass[10pt,twocolumn,twoside]{IEEEtran}
%\documentclass[conference]{IEEEtran}
%\documentclass[journal]{IEEEtran}
%\usepackage{tipa}
%---------------------------------------------------------------------------
%\usepackage{cite}
%\usepackage{graphicx}
%\usepackage{psfrag}
%\usepackage{subfigure}
%\usepackage{url}
%\usepackage{stfloats}
%\usepackage{amsfonts,amssymb,amsmath,bm,paralist,theorem,cite,ifthen,color}
%\usepackage{array}
%\usepackage{caption}
%\usepackage{calc}
%\usepackage{subfig}
%\usepackage{longtable}
%\usepackage{stfloats}  % Written by Sigitas Tolusis
%\usepackage{algpseudocode}
%\usepackage{algorithmicx}
%
%\usepackage{graphics,booktabs,color,epsfig,subfigure}
%
%
%\usepackage{psfrag,cite,amsmath,graphicx,epsfig,latexsym,subfigure,color}
%%\usepackage[active]{srcltx}
%% If the IEEEtran.cls has not been installed into the LaTeX system files,
%% manually specify the path to it:
%% \documentclass[conference]{../sty/IEEEtran}
%
%
%%\usepackage{mathrsfs}
%%\IEEEoverridecommandlockouts
%\usepackage{amsfonts, bm}
%\usepackage{graphics,booktabs,color,epsfig,subfigure}
%-------------------------------------------------------------------------

\usepackage{amsmath}
\usepackage{amssymb}
\usepackage{amsfonts}
\usepackage{graphicx}
\usepackage{epsfig}
\usepackage{subfigure}
\usepackage{psfrag}
\usepackage{xcolor}
\usepackage{amsfonts, bm}
\usepackage{theorem}

\newtheorem{lemma}{Lemma}
\newtheorem{prop}{Proposition}
\newtheorem{theorem}{Theorem}

\newtheorem{corollary}{Corollary}
\newtheorem{remark}{Remark}

\theorembodyfont{\rmfamily}
\newenvironment{proof}[1][Proof]{\begin{trivlist}
\item[\hskip \labelsep {\bfseries #1}]}{\end{trivlist}}

\newcommand{\st}{{\rm s.t.}}

\newcommand{\bC}{{\bf C}}

\newcommand{\bv}{{\bm{v}}}

\newcommand{\bu}{\bm{u}}
\newcommand{\baru}{{\bar u}}

\newcommand{\bA}{{\textbf{A}}}

\newcommand{\bx}{\mathbf{x}}

\newcommand{\bH}{\mathbf{H}}
\newcommand{\bB}{\mathbf{B}}

\newcommand{\bV}{\mathbf{V}}
\newcommand{\bI}{\mathbf{I}}
\newcommand{\bD}{\mathbf{D}}

\newcommand{\bOmega}{\mathbf{\Omega}}
\newcommand{\bPsi}{\mathbf{\Psi}}
\newcommand{\bUpsilon}{\mathbf{\Upsilon}}

\newcommand{\sinr}{\textrm{SINR}}
\newcommand{\trace}{\textrm{Tr}}
%\newcommand{\vectornorm}[1]{\left\|\left|#1\right\|\right|}

%\newcommand{\bB}{\mathbf{B}}
%\newcommand{\bD}{\mathbf{D}}

%\newcommand{\bv}{\bm{v}}

%\usepackage{CJK}
%\usepackage{bbm}
%\DeclareGraphicsRule{.jpg}{eps}{.bb}{}

% *** Do not adjust lengths that control margins, column widths, etc. ***
% *** Do not use packages that alter fonts (such as pslatex).         ***
% There should be no need to do such things with IEEEtran.cls V1.6 and later.

% correct bad hyphenation here
%\hyphenation{op-tical net-works semi-conduc-tor IEEEtran}

\begin{document}

% paper title
\title{SINR Constrained Beamforming for a MIMO Multi-user Downlink System}
%A SDP Approach for Range-free Localization in Wireless Sensor Network

% author names and affiliations
% use a multiple column layout for up to three different
% affiliations
\author{
\authorblockN{Qingjiang Shi, Meisam Razaviyayn, Mingyi Hong and Zhi-Quan Luo}
%\authorblockA{$^1$Department of Communication Engineering, Zhejiang Sci-Tech University, Hangzhou, China}
%\authorblockA{$^2$Department of Electrical and Computer Engineering, University of Minnesota, Minneapolis, MN, USA}
\thanks{Q. Shi is with the School of Information and Science Technology, Zhejiang Sci-Tech University, Hangzhou 310018, China. Email: qing.j.shi@gmail.com.}
%\thanks{M. Razaviyayn is with the Dept. of Electrical and Computer Engineering, Stanford University, USA. Email: meisamr@stanford.edu.}
%\thanks {M. Hong is with the Dept. of Industrial and Manufacturing Systems Engineering, Iowa State University, IA 50011, USA. Email: mingyi@iastate.edu}
%\thanks{Z.-Q. Luo is with the Dept. of Electrical and Computer Engineering, University of Minnesota, Minneapolis, MN, USA. Email: luozq@ece.umn.edu}
\thanks{M. Razaviyayn, M. Hong, and Z.-Q. Luo were all with the Dept. of Electrical and Computer Engineering, University of Minnesota, Minneapolis, MN, USA. Email: luozq@ece.umn.edu.}
}

\maketitle

\begin{abstract}
Consider a multi-input multi-output (MIMO) downlink multi-user channel. A well-studied problem in such system is the design of linear beamformers for power minimization with the quality of service (QoS) constraints. The most representative algorithms for solving this class of problems are the so-called MMSE-SOCP algorithm \cite{Visotsky1999,Wong2005} and the UDD algorithm \cite{Codreanu2007}. The former is based on alternating optimization of the transmit and receive beamformers; while the latter is based on the well-known uplink-dowlink duality theory. Despite their wide applicability, the convergence (to KKT solutions) of both algorithms is still open in the literature. In this paper, we rigorously establish the convergence of these algorithms for QoS-constrained power minimization (QCPM) problem with both single stream and multiple streams per user cases. Key to our analysis is the development and analysis of a new MMSE-DUAL algorithm, which connects the MMSE-SOCP and the UDD algorithm.
%first investigate the QoS-constrained power minimization (QCPM) problem of single stream case with a thorough KKT analysis. Based on the KKT analysis result,
%we then propose a low complexity iterative algorithm called MMSE-DUAL algorithm, which is equivalent to the MMSE-SOCP algorithm but plays a key role in establishing the connections between the MMSE-SOCP algorithm and the UDD algorithm.
%Our theoretical analysis shows that the MMSE-SOCP and the UDD algorithm can both monotonically converge to the set of KKT points of the QCPM problem. Finally we extends these algorithms to the multiple stream case and prove that they can also reach a stationary point of the QCPM problem of multiple stream case under a mild condition.
Our numerical experiments show that 1) all these algorithms can almost always reach points with the same objective value irrespective of initialization; 2) the MMSE-SOCP/MMSE-DUAL algorithm works well while the UDD algorithm may fail with an infeasible initialization.
\end{abstract}
%{\color{red}[[I have reorganized the paper and rewritten many part of the paper.]]}
\maketitle

\section{Introduction}
Multi-user MIMO (MU-MIMO) is a key building block of the next generation wireless communication system. In a MU-MIMO downlink system, a base station (BS) equipped with multiple antennas
simultaneously transmits data to a group of multiple antenna users. The multi-user interference, which is the major performance limiting factor of MU-MIMO systems, must be managed intelligently using the physical layer techniques such as beamforming. In general, there are
two major objectives in the beamformer design problem. One is to
maximize a system utility (e.g., throughput) under some power
constraint, while the other one is to minimize the system transmission
power subject to QoS requirements \cite{hong12survey}. Although both formulations are justifiable and well-studied, the latter is more appropriate for scenarios where the users need a guaranteed QoS level.

Although the QoS-constrained power minimization (QCPM) problem is globally solvable in polynomial time for MISO system \cite{Ottersten2001,Yuwei2010,Wiesel06}, this problem is highly non-convex and difficult to solve in MIMO systems.  In fact, it has been shown that when the BSs and the users are equipped with more than two antennas, the problem becomes NP-hard \cite{Razaviyayn12maxmin}. Therefore, many algorithms have been proposed to solve this problem approximately.
For example, the references \cite{Spencer2002} and \cite{Pan2004} propose algorithms based on interference nulling, which can completely eliminate the inter-user interference. In these algorithms, the search space of transmit and
receive beamformers is limited to zero forcing transceivers, leading
to simple but suboptimal solutions. In addition, such interference nulling based methods require that the number of transmit antennas is no less than the total number of active users, which is
impractical in many scenarios.

Another approach for solving the MU-MIMO downlink QCPM problem is based on the iterative
optimization methods \cite{Khachan2006,Codreanu2007, Chang2002,Wong2005,Visotsky1999}.
References \cite{Khachan2006,Codreanu2007, Chang2002} provide
iterative algorithms that update the transmit beamformers, receive beamformers,
and transmit powers\footnote{In \cite{Khachan2006,Codreanu2007, Chang2002},
the power allocation and beamforming are separated and thus both
need to be optimized.} by switching between the downlink and the
dual uplink channels. Central to these methods is the notion of
uplink-downlink duality (UDD) theory \cite{Boche2001,Viswanath2003,Hunger2008}, which guarantees that
{a set of target SINR levels is achievable in the downlink channel
if and only if the same set of SINR levels is achievable in the corresponding dual
uplink channel}. We refer to such algorithms as UDD algorithms. The UDD algorithm was
first proposed in \cite{Chang2002} for the multi-antenna case, where the \emph{receive} (resp. transmit) beamformer update is followed immediately by the \emph{transmit} (resp. receive) power update. The UDD algorithm of \cite{Codreanu2007} differs from that of \cite{Chang2002} in the order of updating transmit/receive beamformers and powers. In the UDD algorithm of \cite{Codreanu2007}, the \emph{transmit} (resp. receive) power is updated exactly after the \emph{transmit} (resp. receive) beamformer. Importantly, it is shown in \cite{Codreanu2007} that the UDD algorithm monotonically decreases the total power consumption while satisfying the QoS constraints. However, the algorithms in \cite{Chang2002,Codreanu2007} can only apply to the \emph{single stream case}. In \cite{Khachan2006} the UDD algorithm has been generalized to the multiple stream case under the assumption of no joint detection at the receivers (i.e., \emph{inter-stream} interference is considered). Notice that, to the best of our knowledge, it is still not known whether the UDD algorithm converges to a KKT solution.

Different from the previous works, the references \cite{Visotsky1999} and \cite{Wong2005} have proposed an iterative algorithm named MMSE-SOCP, which aims to solve the QCPM problem directly. The algorithm consists of the following two key steps: 1) Fixing the transmit power, update the receive beamformers using MMSE receiver,
and 2) Fixing the receive beamformers, update the transmit beamformers by solving the power minimization problem with respect
to the transmit beamformers. The authors show that the total transmit power monotonically decreases and thus converges. However, the convergence of the MMSE-SOCP iterates  to a KKT solution is not known. Nevertheless, the authors of \cite{Chang2002} and \cite{Wong2005} observe that the MMSE-SOCP algorithm always generates a sequence that converges to a unique solution, irrespective of the
initial point (or at least with high probability, see
\cite{Chang2002}). Hence, a conjecture has been made in
\cite{Wong2005}, stating that the MMSE-SOCP algorithm probably reaches a local optimum solution of the QCPM problem\footnote{It is argued in \cite{Wong2005} that, ``Though the proposed algorithm always converges and seems
to converge to a unique optima irrespective of the starting point
from the simulation results,it may be possible that the steadystate
solution is a \emph{local optimum.}''}

In this paper, we settle the convergence issue related to the MMSE-SOCP and the UDD algorithm. We show that both algorithms converge globally to the set of KKT solutions, regardless of the number of streams intended for each user. We start by analyzing the KKT conditions of the QCPM problem of the single stream case. Based on the analysis, we propose a novel iterative algorithm called MMSE-DUAL algorithm, which is
essentially equivalent to the MMSE-SOCP algorithm, but with the added benefits of having almost closed form updates and a lower complexity. Through the MMSE-DUAL algorithm,
we reveal some connections between the UDD algorithm\cite{Codreanu2007} and the MMSE-SOCP algorithm \cite{Visotsky1999,Wong2005}.
More importantly we prove that both the MMSE-DUAL algorithm and the UDD algorithm monotonically converge to the set of  KKT solutions of the QCPM problem. In addition, we extend the algorithms to the multiple stream case and prove that they can also reach a KKT point of the QCPM problem under some mild conditions. As will be seen later, the MMSE-DUAL algorithm has a lower complexity than the UDD algorithm if the number of streams is greater than the number of transmit antennas. Moreover, although both algorithms require feasible initialization, it is easier for the MMSE-DUAL algorithm to obtain a feasible initialization (see Remark \ref{rem:feasibility} in Section IV).

The remainder of this paper is organized as follows. In
Section II we give the formulation of the QCPM problem and provide a brief review of the existing algorithms. In Section III we propose the MMSE-DUAL algorithm which reveals the relation between the  MMSE-SOCP and the UDD algorithm. Then, we state the convergence results of the three algorithms in Section IV and extend the algorithms to the multiple stream case in Section V. Finally, section VI presents some simulation results and Section VII concludes the paper.

\emph{Notations}: Throughout this paper, we use uppercase bold letters for matrices,
lowercase bold letters for column vectors, and regular letters for scalars. The superscript $^H$ is used to denote the Hermitian transpose of a matrix. For a complex number $a$, $\angle(a)$ and $\textrm{Im}(a)$ denote its phase angle and imaginary part, respectively. For a function $f(\cdot)$, $\nabla_{\bx} f(\cdot)$ denotes its gradient with respect to the variable $\bx$. For a matrix $\bA$, $\bA\succeq0$ indicates that $\bA$ is a positive semidefinite matrix. $\bI$ denotes the identity matrix of an appropriate size. The circularly symmetric complex Gaussian distribution is represented by $\mathcal{CN}(\mu, \sigma^2)$, where $\mu$ is the mean and $\sigma^2$ is the variance of the distribution. The notations $\trace(\cdot)$ and $\det(\cdot)$ represent the trace and the determinant operator, respectively.

\section{Problem Formulation And Existing Algorithms}
\subsection{Problem Formulation}
Consider a multi-user MIMO downlink system with $K$ users, where the
BS is equipped with $M > 1$ antennas and each user~$k$ is equipped
with $N_k > 1$ antennas. Let us use $\mathcal{K} = \{1,2,\ldots,K\}$  to denote the set
of all users. Assume for now that
the BS transmits the single stream $s_k$ to the intended
receiver~$k$ with no multiplexing (the multiple stream case will be considered in Section VI). Let us also assume that the BS utilizes the transmit
beamformer~$\bv_k\in\mathbb{C}^{M\times 1}$  to send the data stream $s_k$ to user $k$. On the other side, user $k$ utilizes the receive beamformer~$\mathbf{u}_k \in \mathbb{C}^{N_k \times 1}$ to estimate its transmitted data stream. The estimated data stream $\hat{s}_k$ can be mathematically expressed as%\footnote{We remark that, although our MU-MIMO system model only considers single stream, the techniques developed in this paper apply to the multistream case with stream-wise QoS constraint.}
\begin{equation}\label{eq:CH_model}
\hat{s}_k=\bu_k^H\left(\mathbf{H}_k\sum_{j=1}^K \bv_js_j+\bm{n}_k\right), \;\; \forall k\in \mathcal{K},
\end{equation}
where $\mathbf{H}_k\in\mathbb{C}^{N_k\times M}$ denotes the channel
matrix from the BS to the receiver~$k$;
$\bm{n}_k\in\mathbb{C}^{N_k\times 1}$ is the additive white Gaussian
noise (AWGN) with distribution $\mathcal{CN}(0, \sigma_k^2)$. The data
streams $s_k$'s are i.i.d. and independent of the noise level; and
have distribution $\mathcal{CN}(0,1)$.

We are interested in designing the transmit and receive beamformers to minimize the transmit power while the users' QoS requirements are satisfied.  Let us consider the signal-to-interference-plus-noise ratio (SINR) as the QoS measure. The SINR of user $k$ is given by:
\begin{align} \label{eq:SINRk}
\sinr_k\triangleq\frac{|\bu_k^H\mathbf{H}_k\bv_k|^2}{\sum_{j\neq k}|\bu_k^H\mathbf{H}_k\bv_j|^2+\sigma_k^2 \|{\bu_k}\|^2}.
\end{align}

Mathematically, the QCPM problem can be written as
\begin{equation*}
\label{eq:P1}
\textrm{(P1)}
\begin{split}
\min_{\bv,\bu} \; &\sum_{k=1}^K \|\bv_k\|^2\\
\textrm{s.t.}\;\; &\frac{|\bu_k^H\mathbf{H}_k\bv_k|^2}{\sum_{j\neq k}|\bu_k^H\mathbf{H}_k\bv_j|^2{+}\sigma_k^2\|\bu_k\|^2}{\geq} \gamma_k, \;\forall k\in \mathcal{K},
\end{split}
\end{equation*}
where $\gamma_k > 0$ is the intended SINR level of user~$k$;  the
set of all transmit beamformers (resp. receive beamformers) is
denoted by $\bv \triangleq \{\bv_1,\ldots, \bv_K\}$ (resp.
$\bu\triangleq \{\bu_1,\ldots, \bu_K\}$). Throughout the paper, we
assume that problem (P1) is feasible and $\sigma_k^2>0$ for all $k$.

%It is readily seen that problem (P1) is nonconvex and hard to solve.
%In this paper, we propose a simple iterative algorithm which seems like alternating optimization algorithm. However, due to the coupled constraints on $\bv_k$s and $\bu_k$s, the theory developed in \cite[Proposition 2.7.1]{Bertsekas_book} does not apply here. Observing the special structure of the problem (P1), we prove that the proposed algorithm, the MMSE-SOCP, and the UDD algorithm all can monotonically converge to a KKT point of the QCPM problem.
\subsection{Existing Algorithms: A Brief Review}
In this subsection, we briefly review the existing MMSE-SOCP algorithm
\cite{Wong2005} and the UDD  algorithm \cite{Codreanu2007}.
\subsubsection{MMSE-SOCP Algorithm \cite{Visotsky1999,Wong2005}}
The MMSE-SOCP algorithm alternates between the following two steps:
\begin{itemize}
\item[1.] Fixing all the transmit beamformers, update the receive beamformers using the MMSE receiver, i.e.,
$$\bu_k=\left(\sum_{j\neq k}\mathbf{H}_k\bv_j\bv_j^H\mathbf{H}_k^H +\sigma_k^2\mathbf{I}\right)^{-1}\mathbf{H}_k\bv_k,\;\forall k \in \mathcal{K}.$$
\item[2.]  Fixing all the receive beamformers, update the transmit beamformers by solving
\begin{equation}
\label{eq:P1-socp}
\begin{split}
\min_{\bv} &\sum_{k=1}^K \|\bv_k\|^2\\
\textrm{s.t.}\;\; &\frac{|\bu_k^H\mathbf{H}_k\bv_k|^2}{\sum_{j\neq k}|\bu_k^H\mathbf{H}_k\bv_j|^2{+}\sigma_k^2\|\bu_k\|^2}{\geq} \gamma_k, \;\forall k\in \mathcal{K},
\end{split}
\end{equation}
 which can be transformed to a second-order cone program (SOCP) \cite{Wong2005,Ottersten2001}.
 %the following second-order cone program (SOCP) through phase rotation\cite{Wong2005,Ottersten2001}
% \begin{equation}
%\label{eq:P1-socp-formulation}
%\begin{split}
%\min_{\bv} &\sum_{k=1}^K \|\bv_k\|^2\\
%\textrm{s.t.}\;\; &\frac{1}{\sqrt{\gamma_k}}\bu_k^H\mathbf{H}_k\bv_k{\geq} \sqrt{\sum_{j\neq k}|\bu_k^H\mathbf{H}_k\bv_j|^2{+}\sigma_k^2\|\bu_k\|^2}, \;\forall k\in \mathcal{K}.
%\end{split}
%\end{equation}
 The above SOCP has $KM$ unknowns and can be efficiently solved via interior-point algorithm; each iteration of the interior-point algorithm has computational complexity of $\mathcal{O}(K^3M^3)$\cite{cvx_book}. %Reference %\cite{Wong2005} has shown that the MMSE-SOCP algorithm can decrease the transmit power consumption as the %iteration proceeds.
\end{itemize}

%
%\begin{table}[h]
%\centering
%\caption{The MMSE-SOCP algorithm}\label{fig:pseudo_code_UDD}
%\begin{tabular}{|p{3.3in}|}
%\hline
%\begin{itemize}
%\item \textbf{Input}: $\mathbf{H}_k$, $\sigma_k^2$, $\gamma_k$, $k=1,2,\ldots,K$%\diag[g_{k1}, g_{k2}, \ldots, g_{kL}], \forall k$,  are the frequency--domain channel responses, $\epsilon$ is a predefined small scalar, $T$ is the maximum total number of iterations.
%\item \textbf{Output}: the transmit beamformer $\bv_k$ and $\bu_k$, $k=1,2,\ldots,K$.
%\item [1] \; set $t=0$ and $N\geq1$
%\item[2] \; initialize $\bv_k$ and $\lambda_k$, $k=1,2,\ldots,K$.
%\item [3] \; \textbf{repeat}
%\item [4]\; \quad $t \longleftarrow t+1$;
%\item [5] \quad  $\bm{u}_k\longleftarrow\left(\sum_{j\neq k}\mathbf{H}_k\bv_j\bv_j^H\mathbf{H}_k^H +\sigma_k^2\mathbf{I}\right)^{-1}\mathbf{H}_k\bv_k$, $\forall k$
%\item [6] \quad update $\bv_k$s by solving the SOCP problem \eqref{eq:P1-socp}
%%\item [] \quad\quad\quad$\min_{\bv_ks} \sum_{k=1}^K \|\bv_k\|^2$
%%\item [] \quad\quad\quad s.t. $\frac{|\bu_k^H\mathbf{H}_k\bv_k|^2}{\sum_{j\neq k}|\bu_k^H\mathbf{H}_k\bv_j|^2{+}\sigma_k^2\|\bu_k\|^2}{\geq} \gamma_k, k=1,2,\ldots,K.$
%\item [7]\; \textbf{until} meet some convergence criterion
%%\item [10]\; $\bP_k=\bF\bX_k\bG_k^H, \forall k$
%\end{itemize}
%\\
%\hline
%\end{tabular}
%\end{table}

\subsubsection{UDD Algorithm \cite{Codreanu2007}}
 Let $\bar\bu_k$ and $\bar\bv_k$ denote the normalized beamformer, i.e.,  $\bu_k=\sqrt{q_k}\bar{\bu}_k$ and $\bv_k=\sqrt{p_k}\bar{\bv}_k$ with $\|\bar{\bu}_k\|= \|\bar{\bv}_k\| =1$. We refer $\bar{\bu}_k$ and $\bar{\bv}_k$ as the normalized beamformers, $p_k$ and $q_k$ as the power consumption. Using these notations, the downlink channel model \eqref{eq:CH_model} can be rewritten as
\begin{equation}\label{eq:dlink_chm}
\hat{s}_k=\sqrt{q_k}\bar{\bu}_k^H\left(\mathbf{H}_k\sum_{j=1}^K \sqrt{p_j}\bar{\bv}_js_j+\bm{n}_k\right),\;\forall k\in \mathcal{K}.
\end{equation}

The UDD algorithm is established by introducing a \emph{virtual} uplink channel, which can be constructed through the following three steps: 1) reverse the directions of all links; 2) replace the channel matrices by their conjugated transposed version (i.e., $\bH_k\leftarrow\bH_k^H$ $\forall k$); 3) exchange the role of transmit beamformer and receive beamformer. Mathematically, the virtual uplink channel model can be represented as
\begin{equation}\label{eq:ulink_chm}
\tilde{s}_k=p_k\bar{\bv}_k^H\left(\sum_{j=1}^K q_j \mathbf{H}_j^H\bar{\bu}_js_j+\tilde{\bm{n}}_k\right), \;\forall k\in \mathcal{K},
\end{equation}
where $\tilde{\bm{n}}_k=\frac{1}{\sigma_k}\bm{n}_k$ is the virtual uplink channel AWGN.

In terms of the channel models \eqref{eq:dlink_chm} and \eqref{eq:ulink_chm}, the SINRs for the downlink and uplink are respectively expressed as
\begin{equation}
\sinr_k^{\textrm{D}}=\frac{p_k\|\bar\bu_k^H\mathbf{H}_k\bar{\bv}_k\|^2}{\bar\bu_k^H\left(\sum_{j\neq k}p_j\mathbf{H}_k\bar{\bv}_j\bar{\bv}_j^H\mathbf{H}_k^H+\sigma_k^2\mathbf{I}\right)\bar\bu_k}
\end{equation}
and
\begin{equation}
\sinr_k^{\textrm{U}}=\frac{q_k\|\bar\bv_k^H\mathbf{H}_k^H\bar{\bu}_k\|^2}{\bar\bv_k^H\left(\sum_{j\neq k}q_j\mathbf{H}_j^H\bar{\bu}_j\bar{\bu}_j^H\mathbf{H}_j+\mathbf{I}\right)\bar\bv_k}.
\end{equation}
Then, fixing the beamformers, the \emph{downlink} power minimization problem can be written as
\begin{equation}\label{eq:downlink}
\begin{split}
&\min_{\{p_k\geq0\}} \sum_{k=1}^K p_k\\
&\textrm{s.t.}~\frac{p_k\|\bar{\bu}_k^H\mathbf{H}_k\bar{\bv}_k\|^2}{\bar{\bu}_k^H\left(\sum_{j\neq k}p_j\mathbf{H}_k\bar{\bv}_j\bar{\bv}_j^H\mathbf{H}_k^H+\sigma_k^2\mathbf{I}\right)\bar{\bu}_k}\geq \gamma_k, \forall k.
\end{split}
\end{equation}
Its dual problem can be obtained by using the Lagrange duality theory with $q_k\ge 0$ corresponding to the Lagrangian multiplier of the $k$th QoS constraint\cite{Codreanu2007}% {\color{blue} [[[add a citation here]]]}:
\begin{equation}\label{eq:down_dual}
\begin{split}
&\max_{\{q_k\geq 0\}} \sum_{k=1}^K \sigma_k^2 q_k\\
&\textrm{s.t.}~\frac{q_k\|\bar{\bv}_k^H\mathbf{H}_k^H\bar{\bu}_k\|^2}{\bar{\bv}_k^H\left(\sum_{j\neq k}q_j\mathbf{H}_j^H\bar{\bu}_j\bar{\bu}_j^H\mathbf{H}_k+\mathbf{I}\right)\bar{\bv}_k}\leq \gamma_k, \forall k.
\end{split}
\end{equation}
It can be shown that problem \eqref{eq:down_dual} is equivalent to the following \emph{uplink} weighted power minimization problem\cite{Song2007}
\begin{equation}\label{eq:uplink}
\begin{split}
&\min_{\{q_k\geq 0\}} \sum_{k=1}^K \sigma_k^2 q_k\\
&\textrm{s.t.}~\frac{q_k\|\bar{\bv}_k^H\mathbf{H}_k^H\bar{\bu}_k\|^2}{\bar{\bv}_k^H\left(\sum_{j\neq k}q_j\mathbf{H}_j^H\bar{\bu}_j\bar{\bu}_j^H\mathbf{H}_k+\mathbf{I}\right)\bar{\bv}_k}\geq \gamma_k, \forall k
\end{split}
\end{equation}
by noting that all the inequality constraints of both problem \eqref{eq:down_dual} and \eqref{eq:uplink} must hold with equality at the optimality and furthermore the corresponding system of linear equations with respect to $\{q_k\}$ has a unique solution\cite[Lemmas 1 \& 2]{Song2007}. To summarize, the classical Lagrange duality theory leads to the well-known uplink-downlink duality theorem:

\begin{theorem} (Uplink-Downlink Duality \cite{Codreanu2007, Song2007}):
\emph{For any given set of normalized beamformers $\{\bar{\bu}_k\}$ and $\{\bar{\bv}_k\}$, a set of given SINR values $\{\gamma_k\}_{k=1}^K$ is achievable in the downlink using the total power consumption $P = \sum_{k=1}^K p_k$ if and only if the same set of SINR values is achievable in the dual uplink channel with the weighted total power of  $P$.}
\end{theorem}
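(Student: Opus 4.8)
The plan is to obtain the duality as a consequence of linear-programming strong duality applied to the two power-minimization problems \eqref{eq:downlink} and \eqref{eq:uplink}, together with the reformulations already recorded above. The first step is to restate ``achievability'' in terms of optimal values. For fixed normalized beamformers, if the targets $\{\gamma_k\}$ are met by some downlink power vector with $\sum_k p_k\le P$, then scaling all $p_k$ up by the common factor $P/\sum_k p_k\ge 1$ only increases every SINR (the numerator of $\sinr_k^{\textrm{D}}$ scales exactly, while the denominator scales by less because of the fixed additive term $\sigma_k^2\|\bar\bu_k\|^2$), so the targets are in fact achievable with total power exactly $P$; the same monotonicity holds in the uplink. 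Hence ``$\{\gamma_k\}$ achievable in the downlink with total power $P$'' is equivalent to ``$P^\star_{\textrm{D}}\le P$'', where $P^\star_{\textrm{D}}$ is the optimal value of \eqref{eq:downlink}, and ``$\{\gamma_k\}$ achievable in the dual uplink with weighted total power $P$'' is equivalent to ``$P^\star_{\textrm{U}}\le P$'', where $P^\star_{\textrm{U}}$ is the optimal value of \eqref{eq:uplink}. The theorem thus reduces to the single identity $P^\star_{\textrm{D}}=P^\star_{\textrm{U}}$.

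To prove this identity I would first note that, with the normalized beamformers held fixed, \eqref{eq:downlink} is a linear program in $\{p_k\}$: clearing the strictly positive denominator (positivity uses $\sigma_k^2>0$ and $\|\bar\bu_k\|=1$), the $k$th constraint becomes the affine inequality $p_k|\bar\bu_k^H\bH_k\bar\bv_k|^2\ge\gamma_k\big(\sum_{j\neq k}p_j|\bar\bu_k^H\bH_k\bar\bv_j|^2+\sigma_k^2\big)$, and the objective $\sum_k p_k$ is linear and bounded below by $0$ on the nonnegative orthant, so whenever \eqref{eq:downlink} is feasible its optimal value is finite and attained. I would then form the Lagrangian with a nonnegative multiplier on each SINR constraint, collect the coefficient of each $p_k$ (its nonnegativity being exactly the condition that the inner infimum over $p_k\ge 0$ is not $-\infty$), and verify by direct computation that, up to the standard reparametrization of the multipliers, the dual program is precisely \eqref{eq:down_dual}; linear-programming strong duality then makes $P^\star_{\textrm{D}}$ equal to the optimal value of \eqref{eq:down_dual}. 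It is worth emphasizing that the noise powers $\sigma_k^2$ appearing as the fixed term of the downlink SINR constraints reappear in the dual as the objective weights on $q_k$, which is exactly why ``total power $P$'' in the downlink corresponds to ``weighted total power $P$'' in the uplink.

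Finally I would invoke the fact already recorded above that, by \cite[Lemmas 1 \& 2]{Song2007}, all inequality constraints of \eqref{eq:down_dual} — and likewise of \eqref{eq:uplink} — are active at optimality, and the resulting square linear system in $\{q_k\}$, whose coefficient matrix is nonnegative, has a unique nonnegative solution; hence \eqref{eq:down_dual} and \eqref{eq:uplink} share this point as their unique optimizer and therefore have the same optimal value, so the optimal value of \eqref{eq:down_dual} equals $P^\star_{\textrm{U}}$. Chaining the two equalities gives $P^\star_{\textrm{D}}=P^\star_{\textrm{U}}$, and translating back through the first paragraph yields the claimed equivalence; the degenerate case in which one of the two problems is infeasible (equivalently, its linear-programming dual is unbounded) is handled symmetrically. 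I expect the only genuinely delicate point to be this tightness-and-uniqueness step, which is the Perron--Frobenius content carried by \cite{Song2007} and which we are entitled to cite; apart from that, the work is the routine bookkeeping identifying the Lagrange dual of \eqref{eq:downlink} with \eqref{eq:down_dual} and checking that strong duality applies.
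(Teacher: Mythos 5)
Your proposal is correct and follows essentially the same route the paper takes: the paper establishes this theorem in the paragraph preceding its statement by identifying \eqref{eq:down_dual} as the Lagrange dual of \eqref{eq:downlink} (with $q_k$ the multipliers) and then passing to \eqref{eq:uplink} via the tightness-and-uniqueness argument of \cite[Lemmas 1 \& 2]{Song2007}. You simply fill in the same outline with more detail, notably the scaling argument reducing ``achievability with power $P$'' to a comparison of optimal values.
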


The UDD algorithm is based on the uplink-downlink duality theory. We summarize the UDD algorithm\footnote{As compared to the UDD algorithm in \cite[Algorithm E]{Song2007} which requires updating uplink/downlink power twice at each iteration, the UDD algorithm\cite{Codreanu2007} illustrated in TABLE \ref{tab:pseudo_code_MMSE_SOCP} requires uplink/downlink power update only once at each iteration and thus is more efficient. However, the convergence result to be shown later in Proposition 2 also applies to the UDD algorithm in \cite{Song2007}.} \cite{Codreanu2007} in TABLE \ref{tab:pseudo_code_MMSE_SOCP}. It should be pointed out that, the UDD algorithm requires a feasible initialization \cite{Codreanu2007}. Otherwise, the steps 7 and 11 in the algorithm are not well-defined.\\
It is also worth noting that the work  \cite{Wong2005} (resp. \cite{Codreanu2007}) only shows that the MMSE-SOCP (resp. UDD) algorithm keeps the QCPM objective function nonincreasing as the iteration proceeds. In this paper, we reveal the connection between the MMSE-SOCP and the UDD algorithm. Moreover prove that the two algorithms can monotonically converge to a KKT solution of the QCPM problem.

%\begin{itemize}
%\item[1)]  Fixing the transmit beamformers, update the receive beamformers using the MMSE receiver.
%\item[2)] Update the dual uplink transmit powers by solving the following optimization problem
%\begin{equation}\label{eq:uplink}
%\begin{split}
%&\min_{\{q_k\geq 0\}_k} \sum_{k=1}^K \sigma_k^2 q_k\\
%&\textrm{s.t.}~\frac{q_k\|\bar{\bv}_k^H\mathbf{H}_k^H\bar{\bu}_k\|^2}{\bar{\bv}_k^H
%\left(\sum_{j\neq
%k}q_j\mathbf{H}_j^H\bar{\bu}_j\bar{\bu}_j^H\mathbf{H}_j+\mathbf{I}\right)\bar{\bv}_k}\geq
%\gamma_k, \forall k,
%\end{split}
%\end{equation}
%where $\bar\bu_k \triangleq \frac{\bu_k}{\|\bu_k\|}$ and $\bar\bv_k \triangleq \frac{\bv_k}{\|\bv_k\|}$.
%\item[3)]  Fixing the receive beamformers, update the transmit beamformers using the MMSE receiver in the dual uplink channel.
%\item[4)] Update the transmit powers by solving the following optimization problem
%\begin{equation}\label{eq:downlink}
%\begin{split}
%&\min_{\{p_k\geq0\}_k} \sum_{k=1}^K p_k\\
%&\textrm{s.t.}~\frac{p_k\|\bar{\bu}_k^H\mathbf{H}_k\bar{\bv}_k\|^2}{\bar{\bu}_k^H\left(\sum_{j\neq k}p_j\mathbf{H}_k\bar{\bv}_j\bar{\bv}_j^H\mathbf{H}_k^H+\sigma_k^2\mathbf{I}\right)\bar{\bu}_k}\geq \gamma_k, \forall k
%\end{split}
%\end{equation}
%\end{itemize}

\begin{table}[h]
\centering
\caption{The UDD algorithm}\label{tab:pseudo_code_MMSE_SOCP}
\begin{tabular}{|p{3.3in}|}
\hline
\begin{itemize}
\item \textbf{Input}: $\mathbf{H}_k$, $\sigma_k^2$, $\gamma_k$, $k=1,2,\ldots,K$%\diag[g_{k1}, g_{k2}, \ldots, g_{kL}], \forall k$,  are the frequency--domain channel responses, $\epsilon$ is a predefined small scalar, $T$ is the maximum total number of iterations.
\item \textbf{Output}: the beamformers $\{\bv_k\}$ and $\{\bu_k\}$.
\item [1] \; set $t=0$ and $N\geq1$
\item[2] \; initialize $\bv_k$, $k=1,2,\ldots,K$.
\item [3] \; \textbf{repeat}
\item [4]\; \quad $t \longleftarrow t+1$;
\item [5] \quad  $\hat{\bu}_k\longleftarrow\left(\sum_{j\neq k}\mathbf{H}_k\bv_j\bv_j^H\mathbf{H}_k^H +\sigma_k^2\mathbf{I}\right)^{-1}\mathbf{H}_k\bv_k$, $\forall k$
%\item [6] \quad $\bu_k\longleftarrow\frac{\bar{\bm{u}}_k}{\|\bar{\bm{u}}_k\|}$, $\forall k$
\item [6] \quad $\bar{\bu}_k\longleftarrow\frac{\hat{\bm{u}}_k}{\|\hat{\bm{u}}_k\|}$, $\forall k$
\item [7]  \quad update $q_k$'s by solving \eqref{eq:uplink}~~~//uplink power allocation
\item [8]\quad $\bu_k\longleftarrow\sqrt{q_k}\bar{\bm{u}}_k$, $\forall k$
\item [9] \quad $\hat{\bm{v}}_k\longleftarrow\left(\mathbf{I}+\sum_{j\neq k}\mathbf{H}_j^H\bu_j\bu_j^H\mathbf{H}_j\right)^{-1}\mathbf{H}_k^H\bu_k$, $\forall k$
\item [10] \quad$\bar{\bm{v}}_k\longleftarrow \frac{\hat{\bm{v}}_k}{\|\hat{\bm{v}}_k\|}$, $\forall k$
\item [11] \quad update $p_k$'s by solving \eqref{eq:downlink}~~~//downlink power allocation
\item [12]\quad $\bv_k\longleftarrow\sqrt{p_k}\bar{\bm{v}}_k$, $\forall k$
\item [13]\; \textbf{until}  some convergence criterion is met
%\item [10]\; $\bP_k=\bF\bX_k\bG_k^H, \forall k$
\end{itemize}
\\
\hline
\end{tabular}
\end{table}

\section{The MMSE-DUAL Algorithm}
In this section, we first analyze the KKT conditions of problem (P1). Based on the results of the KKT analysis, we then present a new iterative power minimization algorithm, dubbed MMSE-DUAL, to solve the system of KKT equations. Moreover, the proposed algorithm reveals some connections between  the MMSE-SOCP and the UDD algorithm.

\subsection{KKT Analysis of the QCPM Problem}
First, let us define the Lagrange function associated with problem (P1) as
\begin{equation}
\begin{split}
&\mathcal{L}(\bm{\lambda}, \bv, \bu)\triangleq\sum_{k=1}^K \|\bv_k\|^2\\
&+\sum_{k=1}^K\lambda_k \left(\sum_{j\neq k}|\bu_k^H\mathbf{H}_k\bv_j|^2
+\sigma_k^2\|\bu_k\|^2-\frac{1}{\gamma_k}|\bu_k^H\mathbf{H}_k\bv_k|^2\right)
\end{split}
\end{equation}
where $\bm{\lambda}=[\lambda_1~\lambda_2~\ldots~\lambda_K]^T$ is the Lagrange multiplier vector. For a given optimal primal dual tuple $(\bu,\bv, \bm{\lambda})$, the KKT conditions of problem (P1) are given by

\begin{subequations}\label{eqKKT}
\begin{align}
%&&\nabla_{\bu_k} L
 &\lambda_k\left(\sum_{j\neq k} \mathbf{H}_k\bv_j\bv_j^H\mathbf{H}_k^H +\sigma_k^2\mathbf{I}-\frac{1}{\gamma_k}\mathbf{H}_k\bv_k\bv_k^H\mathbf{H}_k^H\right)\bu_k=0,\;\forall k,\label{neq:KKT1}\\
&\left(\mathbf{I}-\frac{\lambda_k}{\gamma_{k}}\mathbf{H}_k^H\bu_k\bu_k^H\mathbf{H}_k+\sum_{j\neq k}\lambda_j\mathbf{H}_j^H\bu_j\bu_j^H\mathbf{H}_j\right)\bv_k=0,\;\forall k,\label{neq:KKT2}\\
&\lambda_k\left(\sum_{j\neq k}\|\bu_k^H\mathbf{H}_k\bv_j\|^2{+}\sigma_k^2\|\bu_k\|^2{-}\frac{1}{\gamma_k}|\bu_k^H\mathbf{H}_k\bv_k|^2\right){=}0,\;\forall k,\label{neq:KKT3}\\
&\gamma_k\left(\sum_{j\neq k}\|\bu_k^H\mathbf{H}_k\bv_j\|^2+\sigma_k^2\|\bu_k\|^2\right)\leq |\bu_k^H\mathbf{H}_k\bv_k|^2,\;\forall k,\label{neq:KKT4}\\
&\lambda_k\geq 0,\;\forall k,\label{neq:KKT5}
\end{align}
\end{subequations}
where $\eqref{neq:KKT1}$ and  $\eqref{neq:KKT2}$ are the first-order optimality conditions with respect to the receive and transmit beamformers, respectively. The equation $\eqref{neq:KKT3}$ is the complementary condition; and the equations $\eqref{neq:KKT4}$ and $\eqref{neq:KKT5}$ are the primal and dual feasibility conditions. In the sequel, we analyze the above KKT system.
%\end{eqnarray}

%\begin{eqnarray}
%&\nabla_{\bu_k} L = 2\lambda_k\left(\gamma_k\sum_{j\neq k} \mathbf{H}_k\bv_j\bv_j^H\mathbf{H}_k^H +\gamma_k\sigma_k^2\mathbf{I}-\mathbf{H}_k\bv_k\bv_k^H\mathbf{H}_k^H\right)\bu_k=0,\label{neq:KKT1}\\
%&\nabla_{\bv_k} L = 2\left(\mathbf{I}-\lambda_k\mathbf{H}_k^H\bu_k\bu_k^H\mathbf{H}_k+\sum_{j\neq k}\lambda_j\gamma_j\mathbf{H}_j^H\bu_j\bu_j^H\mathbf{H}_j\right)\bv_k=0,\label{neq:KKT2}\\
%&\lambda_k\left(\gamma_k\left(\sum_{j\neq k}\|\bu_k^H\mathbf{H}_k\bv_j\|^2+\sigma_k^2\|\bu_k\|^2\right)-|\bu_k^H\mathbf{H}_k\bv_k|^2\right)=0,\label{neq:KKT3}\\
%&\gamma_k\left(\sum_{j\neq k}\|\bu_k^H\mathbf{H}_k\bv_j\|^2+\sigma_k^2\|\bu_k\|^2\right)\leq |\bu_k^H\mathbf{H}_k\bv_k|^2,\label{neq:KKT4}\\
%&\lambda_k\geq 0, ~k=1,2,\ldots,K. \label{neq:KKT5}
%\end{eqnarray}

\begin{lemma}
\label{lemma:lambdapositive}
\emph{For any primal-dual tuple $(\bu,\bv, \bm{\lambda})$ that satisfies \eqref{eqKKT}, we have $\lambda_k>0, \forall k\in \mathcal{K}$, and
 all the QoS constraints hold with equality.}
\end{lemma}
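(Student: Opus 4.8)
The plan is to establish $\lambda_k>0$ by contradiction, after which the tightness of every QoS constraint follows at once from complementary slackness.

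First I would suppose, toward a contradiction, that $\lambda_k=0$ for some $k\in\mathcal{K}$. Substituting this into the first-order optimality condition \eqref{neq:KKT2} for $\bv_k$ removes the term $\frac{\lambda_k}{\gamma_k}\bH_k^H\bu_k\bu_k^H\bH_k$ and leaves
\[
\left(\bI+\sum_{j\neq k}\lambda_j\,\bH_j^H\bu_j\bu_j^H\bH_j\right)\bv_k=\bzero .
\]
By the dual feasibility \eqref{neq:KKT5} every $\lambda_j\ge 0$, so each summand $\lambda_j\bH_j^H\bu_j\bu_j^H\bH_j=\lambda_j(\bH_j^H\bu_j)(\bH_j^H\bu_j)^H$ is positive semidefinite; hence the coefficient matrix is $\bI$ plus a positive semidefinite matrix, thus positive definite and invertible, and therefore $\bv_k=\bzero$.

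Next I would substitute $\bv_k=\bzero$ into the primal feasibility condition \eqref{neq:KKT4} for user~$k$: its right-hand side $|\bu_k^H\bH_k\bv_k|^2$ vanishes, giving
\[
\gamma_k\left(\sum_{j\neq k}\|\bu_k^H\bH_k\bv_j\|^2+\sigma_k^2\|\bu_k\|^2\right)\le 0 .
\]
Since $\gamma_k>0$, $\sigma_k^2>0$, and every term on the left is nonnegative, this forces $\bu_k=\bzero$. But a zero receive beamformer is not admissible: with $\bu_k=\bzero$ the ratio $\sinr_k$ in \eqref{eq:SINRk} becomes the indeterminate form $0/0$, so the QoS requirement for user~$k$ is not meaningfully defined at such a point. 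This contradiction establishes $\lambda_k>0$ for all $k\in\mathcal{K}$. With $\lambda_k>0$ in hand, the complementary slackness equation \eqref{neq:KKT3} can be divided by $\lambda_k$ to yield
\[
\sum_{j\neq k}\|\bu_k^H\bH_k\bv_j\|^2+\sigma_k^2\|\bu_k\|^2-\frac{1}{\gamma_k}|\bu_k^H\bH_k\bv_k|^2=0 ,
\]
which is precisely the assertion that the $k$th QoS constraint holds with equality.

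The algebra here is routine; the one point deserving care is the implicit nondegeneracy $\bu_k\neq\bzero$, needed to rule out the trivial tuple $\bu_k=\bv_k=\bzero$, $\lambda_k=0$, which otherwise formally satisfies \eqref{eqKKT}. I would make this nondegeneracy explicit (it is already built into the model through the SINR definition \eqref{eq:SINRk}), and may also note that the MMSE receiver update employed by all three algorithms keeps $\bu_k\neq\bzero$ as long as $\bH_k\bv_k\neq\bzero$, so the degenerate case never arises along the iterations.
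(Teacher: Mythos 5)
Your proposal is correct and follows essentially the same route as the paper: assume $\lambda_k=0$, use \eqref{neq:KKT2} to force $\bv_k=\bzero$ (the paper left-multiplies by $\bv_k^H$ to get a sum of nonnegative terms equal to zero, which is equivalent to your observation that the coefficient matrix is $\bI$ plus a positive semidefinite matrix and hence invertible), and then derive a contradiction with the QoS requirement $\gamma_k>0$. Your explicit tracing of the degenerate case $\bu_k=\bv_k=\bzero$ through \eqref{neq:KKT4} is a slightly more careful rendering of the paper's terse final sentence, but the argument is the same.
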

\begin{proof}
We prove this using contradiction. Assume the contrary that one of the optimal Lagrange multipliers, say $\lambda_k$, equals to zero. Multiplying \eqref{neq:KKT2} by $\bv_k^H$ yields
$$\|\bv_k\|^2+\sum_{j\neq k}\lambda_j\bv_k^H\mathbf{H}_j^H\bu_j\bu_j^H\mathbf{H}_j\bv_k=0.$$
This implies $\bv_k=\bm{0}$, which contradicts the assumption that $\gamma_k>0$.
\end{proof}

Let us define
$\mathbf{C}_k \triangleq \sum_{j\neq k}\mathbf{H}_k\bv_j\bv_j^H\mathbf{H}_k^H +\sigma_k^2\mathbf{I}$
and
$\mathbf{A}_k \triangleq \sum_{j\neq k}\mathbf{H}_k\bv_j\bv_j^H\mathbf{H}_k^H +\sigma_k^2\mathbf{I}-\frac{1}{\gamma_k}\mathbf{H}_k\bv_k\bv_k^H\mathbf{H}_k^H.$
\begin{lemma}
\label{lemma:Akuk}
\emph{
For any primal-dual tuple $(\bu,\bv, \bm{\lambda})$ that satisfies \eqref{eqKKT}, the minimum eigenvalue of $\mathbf{A}_k$ is zero.
Furthermore, the optimal normalized receive beamformers are given by
\begin{align}
\bu_k=\frac{\hat{\bm{u}}_k}{\|\hat{\bm{u}}_k\|}\quad \mbox{with}\quad \hat{\bm{u}}_k=\mathbf{C}_k^{-1}\mathbf{H}_k\bv_k, \;\forall k\in \mathcal{K}.\label{eqMMSEReceiver}
\end{align}
}
\end{lemma}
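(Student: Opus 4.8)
The plan is to combine Lemma~\ref{lemma:lambdapositive} with the first-order condition \eqref{neq:KKT1} and then argue on the rank-one structure of $\mathbf{A}_k$. Since Lemma~\ref{lemma:lambdapositive} gives $\lambda_k>0$ for every $k$, \eqref{neq:KKT1} collapses to $\mathbf{A}_k\bu_k=\mathbf{0}$. Next I would record that $\bu_k\neq\mathbf{0}$ at any tuple obeying \eqref{eqKKT}: if $\bu_k=\mathbf{0}$ then \eqref{neq:KKT2} reads $\big(\mathbf{I}+\sum_{j\neq k}\lambda_j\mathbf{H}_j^H\bu_j\bu_j^H\mathbf{H}_j\big)\bv_k=\mathbf{0}$, whose matrix is $\succeq\mathbf{I}\succ\mathbf{0}$, forcing $\bv_k=\mathbf{0}$ and contradicting $\gamma_k>0$ exactly as in the proof of Lemma~\ref{lemma:lambdapositive}. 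Hence $\mathbf{A}_k$ is singular with $\bu_k$ in its kernel, and the two claims will follow from (i) showing $\mathbf{A}_k$ has at most one nonpositive eigenvalue, so that the zero eigenvalue is its minimum, and (ii) solving $\mathbf{A}_k\bu_k=\mathbf{0}$ explicitly.

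For (i), write $\mathbf{A}_k=\mathbf{C}_k-\frac{1}{\gamma_k}\mathbf{H}_k\bv_k\bv_k^H\mathbf{H}_k^H$, where $\mathbf{C}_k\succ\mathbf{0}$ because $\sigma_k^2>0$ and the subtracted term is Hermitian positive semidefinite of rank at most one. The interlacing inequalities for a rank-one deflation (equivalently, Sylvester's law of inertia applied to $\mathbf{A}_k=\mathbf{C}_k^{1/2}(\mathbf{I}-\frac{1}{\gamma_k}\mathbf{w}_k\mathbf{w}_k^H)\mathbf{C}_k^{1/2}$ with $\mathbf{w}_k=\mathbf{C}_k^{-1/2}\mathbf{H}_k\bv_k$) show that $\mathbf{A}_k$ has at least $N_k-1$ strictly positive eigenvalues, hence at most one nonpositive eigenvalue. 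Since $0$ is already known to be an eigenvalue of $\mathbf{A}_k$, it must be this unique nonpositive one, so $\lambda_{\min}(\mathbf{A}_k)=0$ and in fact $\mathbf{A}_k\succeq\mathbf{0}$.

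For (ii), expand $\mathbf{A}_k\bu_k=\mathbf{0}$ as $\mathbf{C}_k\bu_k=\frac{1}{\gamma_k}\mathbf{H}_k\bv_k\,(\bv_k^H\mathbf{H}_k^H\bu_k)=c_k\mathbf{H}_k\bv_k$ with the scalar $c_k\triangleq\frac{1}{\gamma_k}\bv_k^H\mathbf{H}_k^H\bu_k$, and invert $\mathbf{C}_k$ to obtain $\bu_k=c_k\mathbf{C}_k^{-1}\mathbf{H}_k\bv_k=c_k\hat{\bm{u}}_k$. Then $\bu_k\neq\mathbf{0}$ forces $c_k\neq0$ and $\hat{\bm{u}}_k\neq\mathbf{0}$, and since $\sinr_k$ --- and every relation in \eqref{eqKKT} --- is invariant under scaling of $\bu_k$, one may replace $\bu_k$ by its normalization, which yields $\bu_k=\hat{\bm{u}}_k/\|\hat{\bm{u}}_k\|$ (the residual unit-modulus factor being immaterial).

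I expect step (i) to be the only delicate point: primal feasibility \eqref{neq:KKT4} only gives $\bu_k^H\mathbf{A}_k\bu_k\le0$, which together with $\mathbf{A}_k\bu_k=\mathbf{0}$ is still consistent with $\mathbf{A}_k$ possessing a negative eigenvalue; it is precisely the rank-one form of the correction term --- and the resulting ``at most one nonpositive eigenvalue'' bound --- that excludes this and pins $\lambda_{\min}(\mathbf{A}_k)$ to $0$. Everything else is routine linear algebra.
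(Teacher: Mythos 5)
Your proof is correct and follows essentially the same route as the paper: the congruence $\mathbf{A}_k=\mathbf{C}_k^{1/2}\bigl(\mathbf{I}-\tfrac{1}{\gamma_k}\mathbf{w}_k\mathbf{w}_k^H\bigr)\mathbf{C}_k^{1/2}$ giving at most one nonpositive eigenvalue, combined with $\lambda_k>0$ from Lemma~\ref{lemma:lambdapositive} so that \eqref{neq:KKT1} places $\bu_k$ in the kernel of $\mathbf{A}_k$. Your step (ii), solving $\mathbf{A}_k\bu_k=\mathbf{0}$ directly for $\bu_k=c_k\mathbf{C}_k^{-1}\mathbf{H}_k\bv_k$, is a slightly more direct path to \eqref{eqMMSEReceiver} than the paper's detour through $\tfrac{1}{\gamma_k}\bv_k^H\mathbf{H}_k^H\mathbf{C}_k^{-1}\mathbf{H}_k\bv_k=1$, and your explicit check that $\bu_k\neq\mathbf{0}$ fills a detail the paper leaves implicit.
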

\begin{proof}
Clearly, \eqref{neq:KKT1} implies that $\mathbf{A}_k$ must have at least one zero eigenvalue. On the other hand, since
\begin{equation}
\label{eq:A_k}
\mathbf{A}_k=\mathbf{C}_k^{\frac{1}{2}}\left(\mathbf{I}-\frac{1}{\gamma_k}\mathbf{C}_k^{-\frac{1}{2}}
\mathbf{H}_k\bv_k\bv_k^H\mathbf{H}_k^H\mathbf{C}_k^{-\frac{1}{2}}\right)\mathbf{C}_k^{\frac{1}{2}},
\end{equation}
$\mathbf{A}_k$ has at most one nonpositive eigenvalue. Hence, the minimum eigenvalue of $\mathbf{A}_k$ is zero. Furthermore, the equation~\eqref{eq:A_k} implies that $\frac{1}{\gamma_k}\bv_k^H\mathbf{H}_k^H\mathbf{C}_k^{-1}\mathbf{H}_k\bv_k=1$ and therefore
\begin{equation}
\label{eq:temp1}
\bH_k \bv_k = \frac{1}{\gamma_k} \bH_k \bv_k \bv_k^H \bH_k^H \mathbf{C}_k^{-1}\bH_k \bv_k.
\end{equation}
Combining \eqref{eq:temp1} and Lemma~\ref{lemma:lambdapositive}, it can be verified that $\bu_k = \mathbf{C}_k^{-1} \bH_k \bv_k$ is the unique solution of \eqref{neq:KKT1} up to scaling.
%otherwise $\mathbf{A}_k$ is of full rank and $\bu_k=\bm{0}$. Since $\lambda_k$ is strictly positive at the optimality, (\ref{neq:KKT1}) is equivalent to the following generalized eigenvalue problem:
%$$\mathbf{C}_k\bu_k=\frac{1}{\gamma_k}\mathbf{H}_k\bv_k\bv_k^H\mathbf{H}_k^H \bu_k.$$
%It is easy to verify that $\bm{f_k}=\mathbf{C}_k^{-1}\mathbf{H}_k\bv_k$ is the unique solution (up to scale) to the above eigenvalue problem.
\end{proof}

%\begin{remark}
% Lemma \ref{lemma:Akuk} verifies that the optimum receiver for the QCPM problem given the transmit beamforming is the MMSE receiver. This result is also shown in \cite{Wong2005}. However, our proof is more rigorous.
% \end{remark}

Defining $\bD_k \triangleq \mathbf{I}+\sum_{j\neq k}\lambda_j\mathbf{H}_j^H\bu_j\bu_j^H\mathbf{H}_j$
and
$\bB_k \triangleq \mathbf{I}+\sum_{j\neq k}\lambda_j\mathbf{H}_j^H\bu_j\bu_j^H\mathbf{H}_j - \frac{\lambda_k}{\gamma_k}\mathbf{H}_k^H\bu_k\bu_k^H\mathbf{H}_k,$
the next lemma follows directly from the KKT equations \eqref{neq:KKT2} and \eqref{neq:KKT3}. The proof of this lemma is similar to the proof of lemma~\ref{lemma:Akuk} and thus omitted from the manuscript.
%
%Similar to the proof of Lemma 3 and notice that the optimal $\bv_k$s must satisfy \eqref{neq:KKT3} with positive $\lambda_k$s, we can prove Lemma 4.

\begin{lemma}
\label{lemma:Bkvk}\it{For any primal-dual tuple $(\bu,\bv, \bm{\lambda})$ that satisfies \eqref{eqKKT}, the minimum eigenvalue of
$\mathbf{B}_k$ is zero. Moreover, the optimal transmit beamformers
are given by
\begin{align}
\bv_k=\mu_k\frac{\hat{\bm{v}}_k}{\|\hat{\bm{v}}_k\|}\quad
\mbox{with}\quad \hat{\bm{v}}_k=\mathbf{D}_k^{-1}\mathbf{H}_k^H\bu_k, \;\forall k \in \mathcal{K}\label{eqTransmitter}
\end{align}
where
the coefficients $\{\mu_k\}$ is chosen such that \eqref{neq:KKT3} is
satisfied.}
\end{lemma}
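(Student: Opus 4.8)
The plan is to mirror the argument used for Lemma~\ref{lemma:Akuk}, with $\bD_k$ playing the role of $\mathbf{C}_k$; the convenient feature here is that $\bD_k\succeq\bI\succ0$ automatically (each $\lambda_j>0$ by Lemma~\ref{lemma:lambdapositive} and each $\bH_j^H\bu_j\bu_j^H\bH_j\succeq\bzero$), so $\bD_k^{1/2}$ and $\bD_k^{-1/2}$ are well defined. Before touching \eqref{neq:KKT2}, I would first extract a few non-degeneracy facts from Lemmas~\ref{lemma:lambdapositive} and \ref{lemma:Akuk}: since $\lambda_k>0$, the complementary condition \eqref{neq:KKT3} is the active QoS equality, and because $\sigma_k^2>0$ and $\bu_k\neq\bzero$ it gives $|\bu_k^H\bH_k\bv_k|^2=\gamma_k\big(\sum_{j\neq k}|\bu_k^H\bH_k\bv_j|^2+\sigma_k^2\|\bu_k\|^2\big)\geq\gamma_k\sigma_k^2\|\bu_k\|^2>0$. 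Hence $\bv_k\neq\bzero$, $\bu_k^H\bH_k\bv_k\neq0$, and $\bH_k^H\bu_k\neq\bzero$ (otherwise $\bu_k^H\bH_k\bv_k=0$), so in particular $\hat{\bm{v}}_k=\bD_k^{-1}\bH_k^H\bu_k\neq\bzero$.

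Next I would use \eqref{neq:KKT2}, which states exactly $\bB_k\bv_k=\bzero$; since $\bv_k\neq\bzero$, $\bB_k$ has a zero eigenvalue. To see it is the smallest one, write the congruence factorization
$$\bB_k=\bD_k^{1/2}\left(\bI-\frac{\lambda_k}{\gamma_k}\,\bD_k^{-1/2}\bH_k^H\bu_k\bu_k^H\bH_k\bD_k^{-1/2}\right)\bD_k^{1/2}.$$
The inner matrix is $\bI$ minus a rank-one positive semidefinite matrix, hence has eigenvalue $1$ with multiplicity $M-1$ and the single eigenvalue $1-\frac{\lambda_k}{\gamma_k}\bu_k^H\bH_k\bD_k^{-1}\bH_k^H\bu_k$. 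By Sylvester's law of inertia the congruence by the invertible $\bD_k^{1/2}$ preserves the inertia, so $\bB_k$ has at most one non-positive eigenvalue and exactly $M-1$ strictly positive ones; combined with the zero eigenvalue just found, the minimum eigenvalue of $\bB_k$ equals $0$, it is simple, and moreover $\frac{\lambda_k}{\gamma_k}\bu_k^H\bH_k\bD_k^{-1}\bH_k^H\bu_k=1$.

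Finally, simplicity of the zero eigenvalue means $\ker\bB_k$ is one-dimensional, so $\bv_k$ is determined up to a scalar by $\bB_k\bv_k=\bzero$. Rearranging that equation as $\bD_k\bv_k=\frac{\lambda_k}{\gamma_k}(\bu_k^H\bH_k\bv_k)\,\bH_k^H\bu_k$ gives $\bv_k=\frac{\lambda_k}{\gamma_k}(\bu_k^H\bH_k\bv_k)\,\bD_k^{-1}\bH_k^H\bu_k$, a nonzero multiple of $\hat{\bm{v}}_k$ since $\bu_k^H\bH_k\bv_k\neq0$. Absorbing the irrelevant phase of this multiple into $s_k$, we may write $\bv_k=\mu_k\,\hat{\bm{v}}_k/\|\hat{\bm{v}}_k\|$ with $\mu_k=\|\bv_k\|\geq0$; substituting this (and the analogous expressions for the other $\bv_j$) into the active constraint \eqref{neq:KKT3} then fixes the $\mu_k$'s, which is the claimed characterization.

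I do not expect a deep obstacle -- the whole content is already present in Lemma~\ref{lemma:Akuk} -- but the step that needs the most care, and the only place the structure genuinely differs from Lemma~\ref{lemma:Akuk}, is the non-degeneracy bookkeeping at the start: one must invoke $\sigma_k^2>0$ together with the active QoS constraint to rule out $\bv_k=\bzero$ and $\bu_k^H\bH_k\bv_k=0$, and then be precise that the inertia count yields \emph{exactly one} zero eigenvalue, since it is this simplicity that legitimizes the ``unique up to scaling'' conclusion.
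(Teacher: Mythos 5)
Your proof is correct and follows exactly the route the paper intends: the paper omits this proof, stating it is ``similar to the proof of Lemma~\ref{lemma:Akuk}'', and your argument is precisely that proof transplanted to $\bB_k$ --- the congruence factorization through $\bD_k^{1/2}$, the inertia count showing at most one nonpositive eigenvalue, and the resulting uniqueness of $\ker\bB_k$ up to scaling. Your added non-degeneracy bookkeeping ($\bv_k\neq\bzero$, $\bu_k^H\bH_k\bv_k\neq 0$, $\bH_k^H\bu_k\neq\bzero$ via the active constraint and $\sigma_k^2>0$) is a welcome tightening of details the paper leaves implicit.
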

%
%\begin{proof}
% Note that $\bB_k$ is a positive definite matrix $\bD_k$ minus a rank one matrix $\frac{\lambda_k}{\gamma_k}\mathbf{H}_k^H\bu_k\bu_k^H\mathbf{H}_k$. Hence, it has at most one nonnegative eigenvalue. Moreover, \eqref{neq:KKT2} implies that $\bB_k$ must have at least one zero eigenvalue. Hence, $\bB_k$ has a unique zero eigenvalue and $M-1$ positive eigenvalues, i.e., $\bB_k$ is positive semidefinite.  $\eqref{neq:KKT2}$ is equivalent to the eigenvalue problem
% $$\bD_k\bv_k=\frac{\lambda_k}{\gamma_k}\mathbf{H}_k^H\bu_k\bu_k^H\mathbf{H}_k\bv_k$$
%which has a solution in the form of $\bv_k=\mu_k\frac{\bar{\bm{v}}_k}{\|\bar{\bm{v}}_k\|}$ with $\bar{\bm{v}}_k=\bD_k^{-1}\mathbf{H}_k^H\bu_k$. Due to $\eqref{neq:KKT3}$, $\mu_k$s must satisfy Eq. \eqref{neq:KKT3}.
%\end{proof}
\begin{corollary} \emph{For any primal-dual tuple $(\bu,\bv, \bm{\lambda})$ that satisfies \eqref{eqKKT}, the Lagrange
multipliers $\{{\lambda}_k\}$  satisfies the system of equations
\begin{equation}\label{eq:standard_fun_lambda}
\lambda_k=\frac{\gamma_k}{1+\gamma_k}\frac{1}{\bu_k^H\bH_k\bUpsilon\bH_k^H\bu_k},\;\forall k\in\mathcal{K}.
\end{equation}
where $\bUpsilon\triangleq \left(\mathbf{I}+\sum_{j=1}^K\lambda_j\mathbf{H}_j^H\bu_j\bu_j^H\mathbf{H}_j\right)^{-1}$. Moreover, the unique solution of \eqref{eq:standard_fun_lambda} can be obtained by a fixed point iteration.}
\end{corollary}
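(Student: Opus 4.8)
The plan is to derive \eqref{eq:standard_fun_lambda} by combining the two eigenvalue characterizations from Lemmas~\ref{lemma:Akuk} and \ref{lemma:Bkvk} with the complementary slackness relation, and then to invoke standard interference-function theory for the fixed-point claim. First I would use Lemma~\ref{lemma:Bkvk}: since the minimum eigenvalue of $\bB_k$ is zero and $\bv_k$ is the corresponding null vector, we have $\bB_k\bv_k=\bzero$, i.e. $\bD_k\bv_k=\frac{\lambda_k}{\gamma_k}\bH_k^H\bu_k\bu_k^H\bH_k\bv_k$. Multiplying on the left by $\bv_k^H$ and by $\bu_k^H\bH_k\bD_k^{-1}$ respectively gives two scalar identities; the key one is that $\frac{\lambda_k}{\gamma_k}\bu_k^H\bH_k\bD_k^{-1}\bH_k^H\bu_k=1$ (the analogue of the relation $\frac{1}{\gamma_k}\bv_k^H\bH_k^H\bC_k^{-1}\bH_k\bv_k=1$ obtained inside the proof of Lemma~\ref{lemma:Akuk}), because $\bB_k\succeq 0$ with a one-dimensional kernel forces the rank-one perturbation term to exactly cancel the smallest eigenvalue of $\bD_k$. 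Hence $\lambda_k = \gamma_k/(\bu_k^H\bH_k\bD_k^{-1}\bH_k^H\bu_k)$.

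Next I would convert $\bD_k^{-1}$ into $\bUpsilon$. Note that $\bD_k = \bUpsilon^{-1} - \lambda_k\bH_k^H\bu_k\bu_k^H\bH_k$, so $\bD_k$ and $\bUpsilon^{-1}$ differ by a rank-one term. Applying the Sherman–Morrison identity to $\bD_k^{-1} = (\bUpsilon^{-1}-\lambda_k\bH_k^H\bu_k\bu_k^H\bH_k)^{-1}$ and then sandwiching by $\bu_k^H\bH_k(\cdot)\bH_k^H\bu_k$, one gets, writing $t_k \triangleq \bu_k^H\bH_k\bUpsilon\bH_k^H\bu_k$,
\begin{equation*}
\bu_k^H\bH_k\bD_k^{-1}\bH_k^H\bu_k = t_k + \frac{\lambda_k t_k^2}{1-\lambda_k t_k} = \frac{t_k}{1-\lambda_k t_k}.
\end{equation*}
Substituting into $\lambda_k = \gamma_k/(\bu_k^H\bH_k\bD_k^{-1}\bH_k^H\bu_k)$ yields $\lambda_k = \gamma_k(1-\lambda_k t_k)/t_k$, which rearranges to $\lambda_k(1+\gamma_k)t_k = \gamma_k$, i.e. exactly \eqref{eq:standard_fun_lambda}. (One should also check $1-\lambda_k t_k>0$, i.e. $\bD_k\succ 0$, which holds since $\bD_k = \bI + \sum_{j\neq k}\lambda_j\bH_j^H\bu_j\bu_j^H\bH_j\succeq\bI$.)

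For the fixed-point claim, I would define $\mathcal{I}_k(\bm{\lambda}) \triangleq \frac{\gamma_k}{1+\gamma_k}\big(\bu_k^H\bH_k(\bI+\sum_{j}\lambda_j\bH_j^H\bu_j\bu_j^H\bH_j)^{-1}\bH_k^H\bu_k\big)^{-1}$ with the $\bu_k$'s held fixed, and verify that $\mathcal{I}=(\mathcal{I}_1,\dots,\mathcal{I}_K)$ is a standard interference function in the sense of Yates: positivity, monotonicity ($\bm{\lambda}\ge\bm{\lambda}'\Rightarrow\mathcal{I}(\bm{\lambda})\ge\mathcal{I}(\bm{\lambda}')$, since increasing any $\lambda_j$ makes $\bUpsilon$ smaller in the PSD order hence shrinks $t_k$ hence enlarges $\mathcal{I}_k$), and scalability ($\alpha>1\Rightarrow\mathcal{I}(\alpha\bm{\lambda})<\alpha\mathcal{I}(\bm{\lambda})$, which follows because $(\bI+\alpha\bM)^{-1}\succ\frac{1}{\alpha}(\bI+\bM)^{-1}$ for $\bM\succeq 0$, $\alpha>1$). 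By Yates' theorem, if a fixed point exists it is unique and the iteration $\bm{\lambda}^{(n+1)}=\mathcal{I}(\bm{\lambda}^{(n)})$ converges to it from any nonnegative start; existence is guaranteed by Lemma~\ref{lemma:lambdapositive} together with the standing feasibility assumption on (P1). The main obstacle I anticipate is the bookkeeping in the Sherman–Morrison step — in particular being careful that the scalar $1-\lambda_k t_k$ is strictly positive so the manipulations are legitimate, and making sure the fixed-point argument is applied with the receive beamformers $\{\bu_k\}$ correctly regarded as parameters rather than as additional variables coupled to $\bm{\lambda}$.
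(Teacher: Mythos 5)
Your proposal is correct and follows essentially the same route as the paper: both derive the scalar equality $\frac{\lambda_k}{\gamma_k}\bu_k^H\bH_k\bD_k^{-1}\bH_k^H\bu_k=1$ from the fact that $\bB_k\succeq 0$ has zero minimum eigenvalue (the paper via two Schur complements, you via the congruence/rank-one cancellation argument), then convert $\bD_k^{-1}$ into $\bUpsilon$ by a rank-one inverse identity (Sherman--Morrison in your case, the push-through identity $(\bI+\bA\bB)^{-1}\bA=\bA(\bI+\bB\bA)^{-1}$ in the paper's), and finally invoke Yates' standard-function framework. Your explicit verification of positivity, monotonicity, and scalability is a welcome addition, as the paper only asserts that the right-hand side is a standard function.
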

\begin{proof}
From Lemma~\ref{lemma:Bkvk}, we have $\bB_k=\bD_k-\frac{\lambda_k}{\gamma_k}\mathbf{H}_k^H\bu_k\bu_k^H\mathbf{H}_k\succeq 0.$
It follows from Schur complement that
\begin{equation}\label{eq:schurc}
\left[\begin{array}{cc}
 \bD_k &\sqrt{\frac{\lambda_k}{\gamma_k}}\mathbf{H}_k^H\bu_k\\
\sqrt{\frac{\lambda_k}{\gamma_k}}\bu_k^H\mathbf{H}_k & 1
\end{array}\right]\succeq 0.
\end{equation}
Since $\bD_k$ is positive definite, using Schur complement again implies
$$\frac{\lambda_k}{\gamma_k}\bu_k^H\mathbf{H}_k\bD_k^{-1}\mathbf{H}_k^H\bu_k \leq 1.$$
For any primal-dual tuple $(\bu,\bv, \bm{\lambda})$ that satisfies the KKT condition \eqref{eqKKT}, the above inequality must hold with equality;
otherwise $\bB_k\succ 0$ by the Schur complement, which contradicts
Lemma~\ref{lemma:Bkvk}. Hence, it holds that
\begin{equation}\label{eq:lambda_eq1}
\lambda_k=\frac{\gamma_k}{\bu_k^H\mathbf{H}_k\left(\bUpsilon^{-1}-\lambda_k\mathbf{H}_k^H\bu_k\bu_k^H\mathbf{H}_k\right)^{-1}\mathbf{H}_k^H\bu_k}, \;\forall k\in\mathcal{K}.
\end{equation}
On the other hand,
\begin{equation}
\begin{split}
&\bu_k^H\mathbf{H}_k\left(\bUpsilon^{-1}-\lambda_k\mathbf{H}_k^H\bu_k\bu_k^H\mathbf{H}_k\right)^{-1}\mathbf{H}_k^H\bu_k\\
=&\bu_k^H\mathbf{H}_k\left(\bI-\lambda_k\bUpsilon\mathbf{H}_k^H\bu_k\bu_k^H\mathbf{H}_k\right)^{-1}\bUpsilon\mathbf{H}_k^H\bu_k\\
=&\bu_k^H\mathbf{H}_k\bUpsilon\mathbf{H}_k^H\bu_k\left(1-\lambda_k\bu_k^H\mathbf{H}_k\bUpsilon\mathbf{H}_k^H\bu_k\right)^{-1},
\end{split}
\end{equation}
where the second equality is due to the identity $(\bI+\bA\bB)^{-1}\bA=\bA(\bI+\bB\bA)^{-1}$\cite[Sec. 3.2.4]{Mtx_book}. Hence \eqref{eq:lambda_eq1} can be rewritten as
\begin{equation}
\lambda_k=\gamma_k\left(\frac{1}{\bu_k^H\bH_k\bUpsilon\bH_k^H\bu_k}-\lambda_k\right),\;\forall k\in\mathcal{K},
\end{equation}
which implies \eqref{eq:standard_fun_lambda}.
Moreover, the right hand side of \eqref{eq:standard_fun_lambda} is a standard
function\footnote{A vector function $\bm{f}(\bm{\lambda})$ is a standard function if it satisfies 1) $\bm{f}(\bm{\lambda})>0$; 2) $\bm{f}(\bm{\lambda})\geq \bm{f}(\bm{\lambda}')$ for $\bm{\lambda}\geq \bm{\lambda}'$; 3) $\alpha\bm{f}(\bm{\lambda})\geq \bm{f}(\alpha\bm{\lambda})$ for $\alpha>1$. If $\bm{f}(\bm{\lambda})$ is a standard function, the system of equations $\bm{\lambda}=\bm{f}(\bm{\lambda})$ has a unique solution which can be obtained by a fixed point algorithm. See more details in \cite{Yates1995}.} of $\{\lambda_k\}$. Hence, the solution of \eqref{eq:standard_fun_lambda} is
unique and can be obtained by a fixed point algorithm \cite{Yates1995}.
\end{proof}

%Following Corollary 1, we have the following lemma which will be used in convergence analysis.
%\begin{lemma}
%The optimal solution to problem \eqref{eq:P1-socp} is unique up to phase rotation.
%\end{lemma}
%\begin{proof}
%Note that \eqref{neq:KKT2}-\eqref{neq:KKT5} are just the KKT system for problem \eqref{eq:P1-socp}.
%According to Corollary 1, the set of Lagrange multipliers $\{\lambda_k\}$ is unique. Hence, $\{\bB_k\}$ is unique for fixed $\bu_k$'s. Moreover, since each $\bB_k$ is positive semidefinite with unique zero eigenvalue, we conclude that $\{\bv_k\}$ satisfying \eqref{neq:KKT2} is unique up to phase rotation.
%\end{proof}
\subsection{The Proposed MMSE-DUAL Algorithm And Its Relations With The Existing Algorithms}
Based on the results of the KKT analysis, we here present a new iterative power minimization algorithm, dubbed MMSE-DUAL algorithm, to
solve the system of KKT equations (\ref{neq:KKT1}-\ref{neq:KKT5}). Our proposed MMSE-DUAL algorithm first updates the receive beamformers using the MMSE receiver \eqref{eqMMSEReceiver}, followed by the update of the transmit beamformers using equations  \eqref{eqTransmitter} and \eqref{eq:standard_fun_lambda}. The algorithm is outlined in TABLE~\ref{tbl:pseudo_code_MIMO}. In this table, $N$ denotes the total number of fixed point iterations for calculating the optimal Lagrange multipliers.

Before stating the properties of the proposed algorithm, let us first see how the MMSE-DUAL algorithm plays a key role in establishing the connection between the MMSE-SOCP algorithm and the UDD algorithm. First notice that in the MMSE-DUAL algorithm, the procedure of updating the transmit beamformers (i.e., Steps 7-13 in TABLE II) is equivalent to solving
\begin{equation*}
\label{eq:P2}
\textrm{(P2)}\begin{split}
&\min_{\{\bv_k\}} \;\;\sum_{k=1}^K \|\bv_k\|^2\\
&\textrm{s.t.}~~\frac{|\bu_k^H\mathbf{H}_k\bv_k|^2}{\sum_{j\neq k}|\bu_k^H\mathbf{H}_k\bv_j|^2{+}\sigma_k^2\|\bu_k\|^2}{\geq} \gamma_k, \;\forall k \in \mathcal{K}.
\end{split}
\end{equation*}
This follows from the fact that the KKT condition of (P2) are identical to (\ref{neq:KKT2}-\ref{neq:KKT5}) and strong duality holds for problem (P2)\cite{Ottersten2001}. Since the updates of the receive beamformer in MMSE-DUAL is the same as the receiver update in the MMSE-SOCP algorithm, the MMSE-DUAL algorithm is in essence the MMSE-SOCP algorithm. The only difference of the MMSE-DUAL algorithm with the MMSE--SOCP \cite{Wong2005} is that, instead of updating the transmit beamformers by directly solving the SOCP (P2), we use semi-closed form computation  \eqref{eqTransmitter} and \eqref{eq:standard_fun_lambda}. It is not hard to see that\footnote{In this paper, we consider practical cases for complexity comparison, i.e., when the number of transmit antennas is greater than the number of antennas at each receiver.} the complexity of each iteration of the fixed point algorithm is dominated by the computation of $\bUpsilon^{-1}$, which is $\mathcal{O}(KM^2+M^3)$. Hence, the MMSE-DUAL algorithm has lower complexity than the MMSE-SOCP algorithm.

Next we explore the relation between the MMSE-DUAL and the UDD algorithm. Comparing the algorithms in TABLE I and II, it is not difficult to see that, the dual variables $\{\lambda_k\}$ and the auxiliary
variables $\{\mu_k\}$ in the MMSE-DUAL algorithm respectively play the role of the dual uplink
transmit power and the downlink transmit power in the UDD
algorithm. This implies that, although the UDD algorithm is
developed from a different point of view, it works in a similar way
as the MMSE-DUAL algorithm towards solving the KKT system \eqref{eqKKT}. Furthermore, the constraints of problem \eqref{eq:downlink} must be satisfied with equality at the optimality. Hence it is readily seen that the MMSE-DUAL algorithm updates the auxiliary variables
$\{\mu_k\}$ in the same way as the downlink transmission power
$\{p_k\}$ in the UDD algorithm; see Step 11 in TABLE I and Step 13 in TABLE II. The only difference between the two
algorithms lies in the update of the dual variables $\{\lambda_k\}$
(or equivalently the dual uplink transmit power $q_k$ in UDD).
In the MMSE-DUAL algorithm, the Lagrange multipliers $\{\lambda_k\}$
are updated by the fixed point algorithm independent of the current
transmit beamformers, while the update of uplink transmit power
in the UDD algorithm depends on the current transmission
beamformers; see Steps 7-10 in TABLE II and Step 7 in TABLE I. Note that, if problem \eqref{eq:uplink} is feasible, the update of the uplink transmit power
in the UDD algorithm is equivalent to solving a linear system of $\{q_k\}$ which has complexity of $\mathcal{O}(K^2M+K^3)$. Hence, when $M\approx K$, the UDD algorithm and the MMSE-DUAL algorithm have comparable complexity. But in the general single stream case where $M>K$, the UDD algorithm has lower complexity than the MMSE-DUAL algorithm.

\begin{table}[h]
\centering
\caption{The MMSE-DUAL algorithm}\label{tbl:pseudo_code_MIMO}
\begin{tabular}{|p{3.6in}|}
\hline
\begin{itemize}
\item \textbf{Input}: $\mathbf{H}_k$, $\sigma_k^2$, $\gamma_k$, $k=1,2,\ldots,K$%\diag[g_{k1}, g_{k2}, \ldots, g_{kL}], \forall k$,  are the frequency--domain channel responses, $\epsilon$ is a predefined small scalar, $T$ is the maximum total number of iterations.
\item \textbf{Output}: the beamformers $\{\bv_k\}$ and $\{\bu_k\}$.
\item [1] \; set $t=0$ and $N\geq1$
\item[2] \; initialize $\bv_k$ and $\lambda_k$, $k=1,2,\ldots,K$.
\item [3] \; \textbf{repeat}
\item [4]\; \quad $t \longleftarrow t+1$;
\item [5] \quad  $\hat{\bm{u}}_k\longleftarrow\left(\sum_{j\neq k}\mathbf{H}_k\bv_j\bv_j^H\mathbf{H}_k^H +\sigma_k^2\mathbf{I}\right)^{-1}\mathbf{H}_k\bv_k$, $\forall k$
\item [6] \quad $\bu_k\longleftarrow\frac{\hat{\bm{u}}_k}{\|\hat{\bm{u}}_k\|}$, $\forall k$
\item [7] \quad \textbf{for} $n=1$ to $N$\quad//fixed point algorithm
\item [8] \quad\quad\;$\bUpsilon\longleftarrow \left(\mathbf{I}+\sum_{j=1}^K\lambda_j\mathbf{H}_j^H\bu_j\bu_j^H\mathbf{H}_j\right)^{-1}$
\item [9]  \quad\quad\;$\lambda_k\longleftarrow\frac{\gamma_k}{1+\gamma_k}\frac{1}{\bu_k^H\bH_k\bUpsilon\bH_k^H\bu_k}$, $\forall k$
    \item[10] \quad\textbf{end for}
\item [11] \quad $\hat{\bm{v}}_k\longleftarrow\left(\mathbf{I}+\sum_{j\neq k}\lambda_j\mathbf{H}_j^H\bu_j\bu_j^H\mathbf{H}_j\right)^{-1}\mathbf{H}_k^H\bu_k$, $\forall k$
\item [12] \quad$\bar{\bm{v}}_k\longleftarrow \frac{\hat{\bm{v}}_k}{\|\hat{\bm{v}}_k\|}$, $\forall k$
\item [13] \quad solve for the linear system of $\{\mu_k\}$:
\item []\quad\quad $\frac{1}{\gamma_k}\mu_k|\bu_k^H\mathbf{H}_k\bar{\bm{v}}_k|^2-\sum_{j\neq k}\mu_j\|\bu_k^H\mathbf{H}_k\bar{\bm{v}}_j\|^2=\sigma_k^2\|\bu_k\|^2$, $\forall k$
\item [14]\quad $\bv_k\longleftarrow\sqrt{\mu_k}\bar{\bm{v}}_k$, $\forall k$
\item [15]\; \textbf{until} some convergence criterion is met
%\item [10]\; $\bP_k=\bF\bX_k\bG_k^H, \forall k$
\end{itemize}
\\
\hline
\end{tabular}
\end{table}

In a nutshell, similar to the MMSE-DUAL algorithm, the MMSE-SOCP algorithm and the UDD algorithm both work towards solving the KKT system \eqref{eqKKT}. In the next section, we rigorously show that the three algorithms indeed monotonically converge to a KKT solution.
%All the above comparison leads to a simple fact: if we can show the convergence of the MMSE-DUAL algorithm, then we can immediately obtain the convergence of the UDD and the MMSE-SOCP algorithm as well. Unfortunately, the convergence of the MMSE-DUAL algorithm is not easy to obtain, at least it does not follow directly from the classical theory of alternating optimization methods \cite{Bertsekas_book}. This is due to the coupling of $\bv$ and $\bu$ in the constraints of the original QCPM problem (P1). In the next subsection, we rigorously show that the MMSE-DUAL algorithm indeed converges to a KKT solution, by exploring certain special structure of problem (P1).

\section{Convergence Results of the MMSE-DUAL/MMSE-SOCP/UDD Algorithms}\label{subConvergence}
In this section, we establish the convergence of the three algorithms. Compared to the existing convergence results of the MMSE-SOCP algorithm in \cite{Wong2005} and the UDD algorithm in \cite{Codreanu2007},
%which only shows the monotonicity of the objective function (i.e., the power consumption non-increases as the iteration proceeds),
we show below a stronger convergence result that every limit point of the three algorithms is a KKT solution of \eqref{eqKKT}. Before stating the convergence results, we first present three lemmas which will be used later in the convergence proof.

%The following lemma illustrates a property of problem \eqref{eq:P1-socp}.
\begin{lemma}
\label{lem:phase-rotation}
For any feasible $\{\bu_k\}$, the optimal solution to problem \eqref{eq:P1-socp} is unique up to phase rotation.
\end{lemma}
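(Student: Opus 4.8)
The plan is to dispose of the non-convexity of \eqref{eq:P1-socp} by fixing a harmless per-user phase ambiguity, which turns it into the minimization of a strictly convex function over a convex set, and then to read off uniqueness from there.

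First I would record the phase invariance: for any unit-modulus complex scalars $c_1,\ldots,c_K$, replacing each $\bv_k$ by $c_k\bv_k$ leaves $\|\bv_k\|^2$ unchanged and leaves every inner product $\bu_j^H\bH_j\bv_\ell$ unchanged in modulus, hence leaves every $\sinr_k$ in \eqref{eq:P1-socp} unchanged. Thus both the objective and the feasible set of \eqref{eq:P1-socp} are invariant under these rotations, and ``unique up to phase rotation'' is precisely the statement that the induced quotient is a single point. Next I would note that at any feasible $\bv$ one has $\bu_k^H\bH_k\bv_k\neq 0$ for every $k$: a feasible $\bu_k$ is nonzero (otherwise $\sinr_k$ is undefined), so $\sigma_k^2>0$ makes the denominator $\sum_{j\neq k}|\bu_k^H\bH_k\bv_j|^2+\sigma_k^2\|\bu_k\|^2$ strictly positive, whence $|\bu_k^H\bH_k\bv_k|^2\geq\gamma_k\sigma_k^2\|\bu_k\|^2>0$. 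Consequently the phase of $\bu_k^H\bH_k\bv_k$ is well defined, and choosing $c_k$ to cancel it shows that any feasible $\bv$ can be rotated, without changing the objective or the SINR values, into
\begin{equation*}
S\triangleq\Big\{\bv\ :\ \im\big(\bu_k^H\bH_k\bv_k\big)=0\ \text{ and }\ \bu_k^H\bH_k\bv_k\geq\sqrt{\gamma_k}\,\Big(\textstyle\sum_{j\neq k}|\bu_k^H\bH_k\bv_j|^2+\sigma_k^2\|\bu_k\|^2\Big)^{1/2}\ \text{ for all }k\Big\}.
\end{equation*}

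The crux is that $S$ is convex, closed, and nonempty: nonempty by the feasibility hypothesis together with the rotation just described; convex because $\im(\bu_k^H\bH_k\bv_k)=0$ is real-linear in $\bv_k$ and each remaining inequality is a second-order cone constraint, its left-hand side being real-affine in $\bv$ (and automatically nonnegative on $S$) and its right-hand side a norm of an affine function of $\bv$. Since the objective $\sum_k\|\bv_k\|^2$ is strictly convex (being the squared Euclidean norm of the stacked vector) and coercive, it attains its minimum over the closed convex set $S$ at a \emph{unique} point $\bv^\star$.

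Finally I would link the two problems. Since $S$ is contained in the feasible set of \eqref{eq:P1-socp}, the minimum of $\sum_k\|\bv_k\|^2$ over $S$ is at least the optimal value of \eqref{eq:P1-socp}; conversely, any optimal solution of \eqref{eq:P1-socp} rotates into $S$ with unchanged objective, so the two optimal values coincide and that rotated point must equal $\bv^\star$. Hence every optimal solution of \eqref{eq:P1-socp} is a per-user phase rotation of $\bv^\star$, which is the claim. I expect the only points needing real care to be (i) the observation that $\bu_k^H\bH_k\bv_k\neq 0$ at feasible points, which legitimizes the normalizing phase, and (ii) checking that after this normalization the set $S$ is genuinely convex; the rest is the standard fact that a strictly convex, coercive function has a unique minimizer over a nonempty closed convex set.
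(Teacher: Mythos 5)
Your proof is correct and follows essentially the same route as the paper's: both arguments rotate away the per-user phase so that $\bu_k^H\bH_k\bv_k$ becomes real and nonnegative, observe that the resulting problem is a second-order cone program with a strictly convex objective and hence has a unique minimizer, and conclude that any two optimal solutions of \eqref{eq:P1-socp} differ only by these phase rotations. The paper phrases this by comparing two phase-anchored SOCPs (one per optimal solution) rather than a single canonical set $S$, but the underlying decomposition and key facts are identical.
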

 \begin{proof}
 Let $\{\bv_k^*\}$ and $\{\bv_k^{**}\}$ be any two optimal solutions to problem \eqref{eq:P1-socp}. Moreover, we define $\theta_k^*\triangleq\angle\left(\bu_k^H\bH_k\bv_k^*\right)$ and $\theta_k^{**}\triangleq\angle\left(\bu_k^H\bH_k\bv_k^{**}\right)$, $\forall k\in \mathcal{K}$. In the following, we prove $\bv_k^*=\bv_k^{**}e^{j(\theta_k^*-\theta_k^{**})}$, $\forall k\in \mathcal{K}$.

It is noted that any feasible solution to the following SOCP
  \begin{equation}
\label{eq:P1-socp-star}
\begin{split}
\min_{\bv} &\sum_{k=1}^K \|\bv_k\|^2\\
\textrm{s.t.}\;\; &e^{-j\theta_k^*}\frac{1}{\sqrt{\gamma_k}}\bu_k^H\mathbf{H}_k\bv_k{\geq} \sqrt{\sum_{j\neq k}|\bu_k^H\mathbf{H}_k\bv_j|^2{+}\sigma_k^2\|\bu_k\|^2}, \;\forall k\in \mathcal{K},\\
&\textrm{Im}(e^{-j\theta_k^*}\bu_k^H\mathbf{H}_k\bv_k)=0, \;\forall k\in \mathcal{K}
\end{split}
\end{equation}
is feasible to problem \eqref{eq:P1-socp}. It follows that the optimal value of problem \eqref{eq:P1-socp} is not greater than that of problem \eqref{eq:P1-socp-star}. On the other hand, it is easily seen that the optimal solution $\{\bv_k^*\}$ to problem \eqref{eq:P1-socp} is a feasible solution to problem \eqref{eq:P1-socp-star}. Hence, $\{\bv_k^*\}$ is an optimal solution to problem \eqref{eq:P1-socp-star}. Similarly, we infer that $\{\bv_k^{**}\}$ is an optimal solution to the following SOCP
  \begin{equation}
\label{eq:P1-socp-starstar}
\begin{split}
\min_{\bv} &\sum_{k=1}^K \|\bv_k\|^2\\
\textrm{s.t.}\;\; &e^{-j\theta_k^{**}}\frac{1}{\sqrt{\gamma_k}}\bu_k^H\mathbf{H}_k\bv_k{\geq} \sqrt{\sum_{j\neq k}|\bu_k^H\mathbf{H}_k\bv_j|^2{+}\sigma_k^2\|\bu_k\|^2}, \;\forall k\in \mathcal{K},\\
&\textrm{Im}(e^{-j\theta_k^{**}}\bu_k^H\mathbf{H}_k\bv_k)=0, \;\forall k\in \mathcal{K}.
\end{split}
\end{equation}
By comparing the above two SOCP formulations and noting $\sum_{k=1}^K \Vert\bv_k^*\Vert^2=\sum_{k=1}^K \Vert\bv_k^{**}\Vert^2$, we conclude that $\{\bv_k^{**}e^{j(\theta_k^*-\theta_k^{**})}\}$ is an optimal solution to problem \eqref{eq:P1-socp-star}. Since the SOCP problem \eqref{eq:P1-socp-star} has a strictly convex objective function, it has a unique solution. Thus we have $\bv_k^*=\bv_k^{**}e^{j(\theta_k^*-\theta_k^{**})}$, $\forall k\in \mathcal{K}$, implying that the optimal solution to problem \eqref{eq:P1-socp} is unique up to phase rotation. This completes the proof.
 \end{proof}

\begin{lemma}\label{lem:lemma4-prop1}
If $\{\bu_k\}$ and $\{\bv_k\}$ satisfy the SINR constraints of problem (P1) with equality and for each $k$ $\bu_k$ is an MMSE-receiver, i.e., $\bu_k=\alpha_k\bC_k^{-1}\bH_k\bv_k$ for all $k$ with $\alpha_k$ being an arbitrary nonzero constant, then we have
\begin{equation}\label{eq:secIV_lemma_eq}
\left(\bC_k-\frac{1}{\gamma_k}\bH_k\bv_k\bv_k^H\bH_k^H\right)\bu_k=0,\;\forall k.
\end{equation}
%where $\bC_k$'s are defined as in Section III.
\end{lemma}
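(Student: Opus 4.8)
The plan is a short substitution argument. First I would rewrite the interference-plus-noise term of user $k$ in matrix form: $\sum_{j\neq k}|\bu_k^H\bH_k\bv_j|^2+\sigma_k^2\|\bu_k\|^2 = \bu_k^H\bC_k\bu_k$, so the hypothesis that the $k$-th SINR constraint holds with equality is precisely $|\bu_k^H\bH_k\bv_k|^2 = \gamma_k\,\bu_k^H\bC_k\bu_k$. Note that $\bC_k\succeq\sigma_k^2\bI\succ0$, so $\bC_k^{-1}$ exists and is Hermitian positive definite; in particular the MMSE receiver $\bu_k=\alpha_k\bC_k^{-1}\bH_k\bv_k$ is well defined, and since the SINR expression only makes sense for $\bu_k\neq0$ we must have $\alpha_k\neq0$ and $\bH_k\bv_k\neq0$.

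Next I would substitute $\bu_k=\alpha_k\bC_k^{-1}\bH_k\bv_k$ into both sides of the equality, abbreviating $t_k\triangleq\bv_k^H\bH_k^H\bC_k^{-1}\bH_k\bv_k$, which is a nonnegative real scalar because $\bC_k^{-1}\succ0$. Then $|\bu_k^H\bH_k\bv_k|^2 = |\alpha_k|^2\,t_k^2$ and $\bu_k^H\bC_k\bu_k = |\alpha_k|^2\,\bv_k^H\bH_k^H\bC_k^{-1}\bC_k\bC_k^{-1}\bH_k\bv_k = |\alpha_k|^2\,t_k$, so the equality constraint becomes $|\alpha_k|^2 t_k^2 = \gamma_k |\alpha_k|^2 t_k$. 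Since $\alpha_k\neq0$ and $t_k>0$ (if $t_k=0$ then $\bu_k^H\bH_k\bv_k=0$, which is impossible because the right-hand side $\gamma_k\,\bu_k^H\bC_k\bu_k$ is strictly positive), dividing through yields $t_k=\gamma_k$, i.e. $\frac{1}{\gamma_k}\bv_k^H\bH_k^H\bC_k^{-1}\bH_k\bv_k = 1$.

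Finally I would verify \eqref{eq:secIV_lemma_eq} by direct computation, using $\bA_k=\bC_k-\frac{1}{\gamma_k}\bH_k\bv_k\bv_k^H\bH_k^H$: $\bA_k\bu_k = \alpha_k\bA_k\bC_k^{-1}\bH_k\bv_k = \alpha_k\big(\bH_k\bv_k-\frac{1}{\gamma_k}\bH_k\bv_k\,(\bv_k^H\bH_k^H\bC_k^{-1}\bH_k\bv_k)\big) = \alpha_k\,\bH_k\bv_k\big(1-\tfrac{t_k}{\gamma_k}\big) = 0$, where the last step uses $t_k=\gamma_k$. Since this holds for every $k$, the lemma follows.

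There is no real obstacle here; the algebra is routine. The only points deserving a line of care are (i) recording that $\bC_k$ is invertible with $\bC_k^{-1}\succ0$, so that $t_k$ is a well-defined real scalar and the MMSE receiver makes sense, and (ii) excluding the degenerate cases $\alpha_k=0$ and $t_k=0$, both incompatible with the strictly positive target $\gamma_k>0$. I would state these explicitly but not belabor them. It is worth noting that, together with $\lambda_k>0$ from Lemma~\ref{lemma:lambdapositive}, the identity \eqref{eq:secIV_lemma_eq} is exactly the first KKT condition \eqref{neq:KKT1}, so this lemma is the link that converts the MMSE-receiver step of the algorithms into that KKT equation.
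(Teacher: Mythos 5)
Your proof is correct and follows essentially the same route as the paper's: both reduce the equality constraint, via the MMSE substitution, to the key identity $\frac{1}{\gamma_k}\bv_k^H\bH_k^H\bC_k^{-1}\bH_k\bv_k=1$ (your $t_k=\gamma_k$). The only difference is cosmetic: the paper concludes by noting that $\bC_k-\frac{1}{\gamma_k}\bH_k\bv_k\bv_k^H\bH_k^H$ is then positive semidefinite, so the vanishing quadratic form $\bu_k^H\bA_k\bu_k=0$ forces $\bA_k^{1/2}\bu_k=0$ and hence $\bA_k\bu_k=0$, whereas you verify $\bA_k\bu_k=0$ by direct substitution of the MMSE form of $\bu_k$ — an equally valid and slightly more elementary finish.
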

\begin{proof}
Since $\{\bu_k\}$ and $\{\bv_k\}$ satisfy the constraints of problem (P1) with equality, we have
$$\sum_{j\neq k}\|\bu_k^H\mathbf{H}_k\bv_j\|^2{+}\sigma_k^2\|\bu_k\|^2-\frac{1}{\gamma_k}|\bu_k^H \mathbf{H}_k\bv_k|^2{=}0,\;\;\forall k$$
which can be rearranged as
\begin{equation}
\label{eq:bk2}
\bu_k^H\left(\mathbf{C}_k-\frac{1}{\gamma_k}\mathbf{H}_k\bv_k\bv_k^H\mathbf{H}_k^H\right)\bu_k=0, \;\forall k.
\end{equation}
By substituting the MMSE receiver $\bu_k=\alpha_k\bC_k^{-1}\bH_k\bv_k$ into \eqref{eq:bk2} and noting that $\alpha_k\neq 0$, $\forall k$, we get
\begin{equation}\nonumber
\bv_k^H\mathbf{H}_k^H\mathbf{C}_k^{-1}\mathbf{H}_k\bv_k\left(1-\frac{1}{\gamma_k}\bv_k^H\mathbf{H}_k^H\mathbf{C}_k^{-1}\mathbf{H}_k\bv_k\right)=0, \;\forall k.
\end{equation}
Since we have $\bv_k^H\mathbf{H}_k^H\mathbf{C}_k^{-1}\mathbf{H}_k\bv_k\neq 0$ (due to $\mathbf{H}_k\bv_k\neq 0$), it follows that
\begin{equation}
\label{eq:proptemp7}
\frac{1}{\gamma_k}\bv_k^H\mathbf{H}_k^H\mathbf{C}_k^{-1} \mathbf{H}_k\bv_k=1, \;\forall k.
\end{equation}
This implies that for each $k$ the matrix $\mathbf{C}_k-\frac{1}{\gamma_k}\mathbf{H}_k\bv_k\bv_k^H \mathbf{H}_k^H$ is positive semidefinite. Hence, from \eqref{eq:bk2}, we obtain
\begin{equation}
\label{eq:fk1}
\left(\mathbf{C}_k-\frac{1}{\gamma_k}\mathbf{H}_k\bv_k\bv_k^H \mathbf{H}_k^H\right)^{\frac{1}{2}}\bu_k=\bm{0}, \;\forall k,
\end{equation}
which completes the proof.
\end{proof}

Due to the uplink--downlink duality, we also have the following lemma. The proof of the lemma is similar to that of Lemma \ref{lem:lemma4-prop1} and thus omitted for brevity.
\begin{lemma}\label{lem:lemma5-prop2}
If $\{\bu_k\}$ and $\{\bv_k\}$ satisfy the SINR constraints of the virtual uplink weighted power minimization problem\footnote{Note that the SINR constraints in \eqref{eq:uplink} hold true for $\bv_k$'s.} \eqref{eq:uplink} with equality and moreover for each $k$, $\bv_k$ is a (virtual uplink) MMSE-receiver, i.e., $\bv_k=\beta_k\bD_k^{-1}\bH_k^H\bu_k$ with $\beta_k$ being an arbitrary nonzero constant, then we have
\begin{equation}\label{eq:secIV_lemma_eq}
\left(\bD_k-\frac{1}{\gamma_k}\bH_k^H\bu_k\bu_k^H\bH_k\right)\bv_k=0,\;\forall k,
\end{equation}
where $\bD_k$'s are defined as in Section III.
\end{lemma}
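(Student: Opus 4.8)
The plan is to mirror the proof of Lemma~\ref{lem:lemma4-prop1} step by step, interchanging the roles of the transmit and receive beamformers (and of the downlink and the virtual uplink channel): $\bv_k$ now plays the part of the MMSE receiver, $\bH_k^H\bu_k$ the part of the effective channel, and $\bD_k$ the part of the interference-plus-noise covariance $\mathbf{C}_k$. First I would rewrite the hypothesis that the uplink SINR constraints of \eqref{eq:uplink} are met with equality. Using the scale invariance of the uplink SINR in $\bv_k$ (the footnote attached to the lemma) and clearing the denominator, this is equivalent to
\[
\bv_k^H\left(\bD_k-\frac{1}{\gamma_k}\bH_k^H\bu_k\bu_k^H\bH_k\right)\bv_k=0,\quad\forall k,
\]
the exact counterpart of \eqref{eq:bk2}.

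Next I would substitute the uplink MMSE form $\bv_k=\beta_k\bD_k^{-1}\bH_k^H\bu_k$ into this quadratic form. Using $\bD_k^{-1}\bD_k\bD_k^{-1}=\bD_k^{-1}$, it collapses to
\[
|\beta_k|^2\,t_k\left(1-\frac{1}{\gamma_k}t_k\right)=0,\qquad t_k\triangleq\bu_k^H\bH_k\bD_k^{-1}\bH_k^H\bu_k .
\]
Since $\bv_k\neq\bm{0}$, $\beta_k\neq 0$ and $\bD_k\succ 0$, we get $\bH_k^H\bu_k\neq\bm{0}$ and hence $t_k>0$; therefore $\frac{1}{\gamma_k}\,\bu_k^H\bH_k\bD_k^{-1}\bH_k^H\bu_k=1$, which is the analogue of \eqref{eq:proptemp7}.

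The rest reuses the rank-one-downdate argument already employed for $\bA_k$ in \eqref{eq:A_k} and for $\bB_k$ in Lemma~\ref{lemma:Bkvk}. Writing
\[
\bD_k-\frac{1}{\gamma_k}\bH_k^H\bu_k\bu_k^H\bH_k=\bD_k^{\frac{1}{2}}\left(\bI-\frac{1}{\gamma_k}\bD_k^{-\frac{1}{2}}\bH_k^H\bu_k\bu_k^H\bH_k\bD_k^{-\frac{1}{2}}\right)\bD_k^{\frac{1}{2}},
\]
the bracketed matrix has eigenvalue $1$ with multiplicity $M-1$ and a single remaining eigenvalue equal to $1-\frac{1}{\gamma_k}t_k=0$, so it is positive semidefinite; hence so is $\bD_k-\frac{1}{\gamma_k}\bH_k^H\bu_k\bu_k^H\bH_k$. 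A positive semidefinite matrix whose associated quadratic form vanishes at $\bv_k$ must annihilate $\bv_k$; taking its Hermitian square root gives $\left(\bD_k-\frac{1}{\gamma_k}\bH_k^H\bu_k\bu_k^H\bH_k\right)^{\frac{1}{2}}\bv_k=\bm{0}$, and therefore $\left(\bD_k-\frac{1}{\gamma_k}\bH_k^H\bu_k\bu_k^H\bH_k\right)\bv_k=\bm{0}$, which is the claim.

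I do not expect a genuine obstacle, since every step has an exact counterpart in the proof of Lemma~\ref{lem:lemma4-prop1} or in the earlier KKT analysis; the only points needing care are bookkeeping ones tied to the uplink--downlink correspondence. One must make sure that, under the standard identification of the Lagrange multipliers $\{\lambda_j\}$ with the dual uplink transmit powers, the denominator of the uplink SINR in \eqref{eq:uplink} really is $\bv_k^H\bD_k\bv_k$ with the $\bD_k$ defined in Section~III, so that the matrix appearing in the conclusion is the intended one; and one must record that $\bH_k^H\bu_k\neq\bm{0}$ --- the counterpart of ``$\bH_k\bv_k\neq\bm{0}$'' in the proof of Lemma~\ref{lem:lemma4-prop1} --- which follows at once from $\bv_k\neq\bm{0}$, $\beta_k\neq 0$, and the MMSE form $\bv_k=\beta_k\bD_k^{-1}\bH_k^H\bu_k$.
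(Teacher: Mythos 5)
Your proof is correct and is exactly the argument the paper intends: the paper omits the proof of this lemma, stating only that it mirrors the proof of Lemma~\ref{lem:lemma4-prop1}, and your write-up faithfully carries out that mirroring (quadratic-form equality, substitution of the uplink MMSE form to get $\frac{1}{\gamma_k}\bu_k^H\bH_k\bD_k^{-1}\bH_k^H\bu_k=1$, the rank-one downdate showing positive semidefiniteness, and the square-root step). The bookkeeping caveats you flag (identification of $\{\lambda_j\}$ with the uplink powers in $\bD_k$, and $\bH_k^H\bu_k\neq\bm{0}$) are the right ones and are handled correctly.
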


\begin{remark}
\label{rem:remak-secIV}
Lemmas \ref{lem:lemma4-prop1} and \ref{lem:lemma5-prop2} indicate that, for the downlink/virtual uplink power minimization problem, if a set of transmit and receive beamformers satisfies the SINR constraints with equality and moreover complies with the MMSE receiver structure, the first--order optimality condition with respect to the downlink/virtual uplink receive beamformers follows. In other words, if a set of transmit and receive beamformers fulfills both the assumptions of Lemma \ref{lem:lemma4-prop1} and  Lemma \ref{lem:lemma5-prop2}, then it is a KKT solution to the KKT system \eqref{eqKKT}.
\end{remark}
\subsection{Convergence of the MMSE-SOCP/MMSE-DUAL Algorithms}
%It can be observed that, given a set of transmit and receive beamformers, if the transmit beamformers is the optimal solution to problem (P1) with fixed receive beamformers, then this set of transmit and receive beamformers also satisfies the first--order optimality condition of problem (P1) with respect to the transmit beamformers. Based on this observation and the remark on Lemma \ref{lem:lemma4-prop1} in Remark \ref{rem:remak-secIV}, we prove the convergence result of the MMSE-SOCP/MMSE-DUAL algorithm by showing that 1) every limit point of the MMSE-SOCP/MMSE-DUAL algorithm, denoted by $\{\bu_k^*, \bv_k^*\}$, satisfies the SINR constraints of problem (P1) with equality and moreover fulfills the MMSE-receiver structure, and that 2) $\{\bv_k^*\}$ is the optimal solution to problem (P1) with there $\bu_k$'s fixed as $\bu_k^*$'s.
In this subsection, we study the convergence behavior of the MMSE-SOCP (or equivalently MMSE-DUAL) algorithm.

 \begin{prop}\label{prop1-secIV}
 \it{Let $\{(\bv^r,\bu^r,\boldmath{\lambda}^r)\}_{r=1}^{\infty}$ denote a sequence generated by the MMSE-DUAL algorithm (or equivalently the MMSE-SOCP algorithm). Suppose $(\bv^0,\bu^0,\boldmath{\lambda}^0)$ is feasible for problem (P1), then every limit point of $\{(\bv^r,\bu^r,\boldmath{\lambda}^r)\}_{r=1}^{\infty}$ is a KKT point of (P1).}
 \end{prop}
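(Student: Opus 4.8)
The plan is to combine the monotone decrease of the transmit power with the structural characterization of Lemma~\ref{lem:lemma4-prop1} and the uniqueness-up-to-phase result of Lemma~\ref{lem:phase-rotation}. I would first record that, along a feasible run, the algorithm is well defined and $P^r:=\sum_k\|\bv_k^r\|^2$ is nonincreasing: since the MMSE receiver maximizes $\sinr_k$ for fixed transmit beamformers, the previous (feasible) iterate $\bv^{r-1}$ stays feasible for the SOCP~\eqref{eq:P1-socp} associated with the updated receivers $\bu^r$, so the recomputed $\bv^r$ satisfies $P^r\le P^{r-1}$. Hence $P^r\downarrow P^*\ge0$, every iterate is feasible for (P1), $\{\bv^r\}$ is bounded, the normalized $\{\bu^r\}$ lie on a product of unit spheres, and $\{\bm{\lambda}^r\}$ is bounded as well: $\bm{\lambda}^r$ is the unique optimal dual multiplier of the SOCP~\eqref{eq:P1-socp} with receivers $\bu^r$ (equivalently the unique solution of the standard-function system~\eqref{eq:standard_fun_lambda}), its primal value $P^r$ is uniformly bounded, and a slight common dilation $(1+\epsilon)\bv^r$ is a Slater point of that SOCP with violation margin and norm controlled uniformly in $r$ (using $\sigma_k^2>0$ and the bound $\|\bH_k^H\bu_k^r\|^2\ge\gamma_k\sigma_k^2/P^0>0$, which follows from feasibility).

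Next I would pass to a limit point. Let $(\bv^*,\bu^*,\bm{\lambda}^*)$ be the limit of $\{(\bv^{r_j},\bu^{r_j},\bm{\lambda}^{r_j})\}$; refining the subsequence I may also assume $(\bv^{r_j-1},\bu^{r_j-1},\bm{\lambda}^{r_j-1})\to(\bv^-,\bu^-,\bm{\lambda}^-)$. Then I would invoke three continuity facts. (i) The receiver update is a continuous map (the $\bC_k$ are invertible because $\sigma_k^2>0$), so $\bu^*$ is the normalized MMSE receiver for $\bv^-$, i.e.\ $\bu_k^*=\alpha_k\bC_k(\bv^-)^{-1}\bH_k\bv_k^-$ with $\alpha_k\neq0$. (ii) The optimal value of the parametric SOCP~\eqref{eq:P1-socp} is continuous in the receivers and, by uniqueness, its optimal dual multiplier is a continuous selection; hence $\bv^*$ solves~\eqref{eq:P1-socp} with receivers $\bu^*$ and multipliers $\bm{\lambda}^*$, and likewise $\bv^-$ solves it with $\bu^-,\bm{\lambda}^-$. (iii) Any optimal solution of~\eqref{eq:P1-socp} makes all its SINR constraints active --- otherwise scaling down one beamformer strictly decreases the objective while preserving feasibility --- so in the limit $\bv^*$ (resp.\ $\bv^-$) satisfies the SINR constraints of (P1) with equality using $\bu^*$ (resp.\ $\bu^-$), and $\sum_k\|\bv_k^*\|^2=\sum_k\|\bv_k^-\|^2=P^*$.

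The decisive step is to reconcile the ``out-of-sync'' pair $(\bv^*,\bu^*)$: the receiver $\bu^*$ was produced from $\bv^-$, not from $\bv^*$. Since $\bu^*$ is the MMSE receiver for $\bv^-$ and $\bv^-$ meets the SINR constraints with equality, $\bv^-$ is feasible for the SOCP~\eqref{eq:P1-socp} with receivers $\bu^*$; as its objective equals the optimal value $P^*$, $\bv^-$ is in fact \emph{optimal} for that SOCP. By Lemma~\ref{lem:phase-rotation}, $\bv_k^-=\bv_k^*e^{j\phi_k}$ for some phases, and because $\bC_k$ depends on the interfering beamformers only through $\bv_j\bv_j^H$, the identity $\bu_k^*=\alpha_k\bC_k(\bv^-)^{-1}\bH_k\bv_k^-$ becomes $\bu_k^*=\alpha_ke^{j\phi_k}\bC_k(\bv^*)^{-1}\bH_k\bv_k^*$, i.e.\ $\bu_k^*$ is an MMSE receiver for $\bv^*$ in the sense of Lemma~\ref{lem:lemma4-prop1}. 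Applying Lemma~\ref{lem:lemma4-prop1} to $(\bu^*,\bv^*)$ gives~\eqref{neq:KKT1}. Finally, $\bv^*$ solves the SOCP~\eqref{eq:P1-socp} with receivers $\bu^*$ and dual multipliers $\bm{\lambda}^*$, and the KKT system of that SOCP (which is the same as (P2)) is precisely \eqref{neq:KKT2}--\eqref{neq:KKT5}, strong duality holding by the Slater point above; so these conditions hold at $(\bv^*,\bu^*,\bm{\lambda}^*)$ as well. Together with~\eqref{neq:KKT1} this shows the triple solves~\eqref{eqKKT}, i.e.\ it is a KKT point of (P1).

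I expect the main obstacle to be item (ii): the uniform boundedness of $\{\bm{\lambda}^r\}$ and the continuity of the optimal value and of the (unique) optimal dual multiplier of the parametric SOCP in the receiver data along the iterate sequence. This requires a \emph{uniform} strict-feasibility (Slater) argument, which is exactly where the feasible initialization and the uniform lower bound on $\|\bH_k^H\bu_k^r\|^2$ enter. Once that is granted, Lemma~\ref{lem:phase-rotation} carries the conceptual weight: it is what allows the receiver $\bu^*$, built from the \emph{previous} transmit beamformer, to be identified as an MMSE receiver for the \emph{limiting} transmit beamformer, so that the structural Lemma~\ref{lem:lemma4-prop1} applies and supplies the first-order condition~\eqref{neq:KKT1} that is otherwise missing.
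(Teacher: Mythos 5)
Your proposal is correct and follows essentially the same route as the paper's proof: pass to a convergent subsequence together with the preceding iterate, show that both transmit-beamformer limits are optimal for the SOCP defined by the limiting receivers, invoke Lemma~\ref{lem:phase-rotation} to identify them up to phase so that the limiting receiver is an MMSE receiver for the limiting transmitter, and then combine Lemma~\ref{lem:lemma4-prop1} with the KKT conditions of the SOCP. The only differences are cosmetic: where you invoke parametric continuity of the SOCP value/dual and a uniform Slater bound to control $\bm{\lambda}^r$, the paper establishes optimality of the limit directly with a strictly feasible test point (plus a scaling argument) and simply produces fresh multipliers at the limit rather than tracking the sequence $\bm{\lambda}^{r_j}$.
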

\begin{proof}
Here we prove the result for the MMSE-SOCP algorithm. The proof for the MMSE-DUAL algorithm follows immediately due to its equivalence to MMSE-SOCP algorithm. %Let $\bu^r \triangleq \{\bu_k^r| k \in \mathcal{K}\}$ be the set of receive beamformers obtained at iteration $r$. Similarly, define $\bv^r \triangleq \{\bv_k^r| k \in \mathcal{K}\}$ to be the set of transmit beamformers at iteration $r$ in the MMSE-SOCP algorithm.
The iterations of the MMSE-SOCP algorithm are illustrated as $\bv^{r-1}\rightarrow\bu^{r}\rightarrow\bv^{r}$, where the two arrows correspond to the two update rules shown in Sec. II.B in order. Since the objective function is coercive, $\{\bv_k^r\}$ is bounded and consequently $\{\bu_k^r\}$ is bounded as well. Hence, the sequence $\{(\bu^{r},\bv^{r})\}$ has at least one limit point. Consider a subsequence $\{(\bu^{r_j},\bv^{r_j})\}_{j=1}^{\infty}$ converging to the limit point $\{\bu^*,\bv^*\}$. Moreover, by further restricting to another subsequence, we can assume that $\bv^{r_j-1}$ converges to a limit point $\bv^{**}$.

In the sequel, we first prove that $\bv^* = \bv^{**}$ (up to a phase rotation). Clearly,
\begin{equation}\label{eq:proptemp1}
\sum_{k=1}^K \|\bv_k^*\|^2 = \sum_{k=1}^K \|\bv_k^{**}\|^2
\end{equation}
since the objective value is decreasing and it is bounded from below. Now consider a fixed transmit beamformer $\bv$ so that
\begin{equation}
\label{eq:proptemp2}
{\rm SINR}_k (\bv,\bu_k^*) > \gamma_k, \forall k \in \mathcal{K},
\end{equation}
where ${\rm SINR}_k(\bv,\bu_k^*)$ is defined in \eqref{eq:SINRk}. Due to the continuity of the SINR function, there exists an index $i$ so that for all $j>i$, $\sinr_k(\bv,\bu_k^{r_j}) \geq \gamma_k, \forall k \in \mathcal{K}$. Since at each iteration of the algorithm, the transmit beamformers are updated  after the receive beamformers, for all $j>i$ we have $
\sum_{k=1}^K \|\bv_k^{r_j}\|^2 \leq \sum_{k=1}^K \|\bv_k\|^2.
$
Letting $j \rightarrow \infty$ implies
\begin{equation}
\label{eq:proptemp3}
\sum_{k=1}^K \|\bv_k^*\|^2 \leq \sum_{k=1}^K \|\bv_k\|^2.
\end{equation}
Furthermore, according to the update rule, we have $\sinr_k (\bv^{r_j},\bu_k^{r_j}) \geq \gamma_k, \forall k$ and thus
\begin{equation}
\label{eq:proptemp4}
\sinr_k(\bv^*,\bu_k^*) \geq \gamma_k,\;\forall k\in \mathcal{K}.
\end{equation}
Combining \eqref{eq:proptemp4} with the fact that  \eqref{eq:proptemp3} holds for any $\bv$ satisfying \eqref{eq:proptemp2}, we obtain\footnote{Note that, for any $\bv$ such that ${\rm SINR}_k (\bv,\bu_k^*) \geq \gamma_k, \forall k \in \mathcal{K}$, we can scale up $\bv$ with any constant $s>1$ so that the scaled $\bv$ (i.e., $s\bv$) satisfies \eqref{eq:proptemp2}. Furthermore, analogous to \eqref{eq:proptemp3}, we have $\sum_{k=1}^K \|\bv_k^*\|^2 \leq \sum_{k=1}^K s^2 \|\bv_k\|^2$, $\forall s>1$, implying $\sum_{k=1}^K \|\bv_k^*\|^2 \leq \sum_{k=1}^K \|\bv_k\|^2$. Therefore, combining \eqref{eq:proptemp4} with the fact that  \eqref{eq:proptemp3} holds for any $\bv$ satisfying \eqref{eq:proptemp2} implies \eqref{eq:proptemp5}. Similar arguments are also used in the proof of Proposition 2.}
\begin{equation}
\label{eq:proptemp5}
\begin{split}
\bv^* \in {\rm arg}\min_{\bv} \; &\sum_{k=1}^K \|\bv_k\|^2 \\
\textrm{s.t.}\;\;\; &\sinr_k(\bv,\bu_k^*) \geq \gamma_k, \forall k.
\end{split}
\end{equation}
On the other hand, since the update of receive beaformers using MMSE receivers keeps the SINR feasibility, we have $\sinr_k(\bv^{r_j-1},\bu_k^{r_j}) \geq \gamma_k$.  Letting $j\rightarrow \infty$, we obtain \begin{equation}
\label{eq:proptemp6}
\sinr_k(\bv^{**},\bu_k^*) \geq \gamma_k, \;\forall k.
\end{equation}
Combining \eqref{eq:proptemp5}, \eqref{eq:proptemp6} and \eqref{eq:proptemp1}, we infer that $\bv^{**}$ is also an optimal solution to problem \eqref{eq:proptemp5}. Hence, according to Lemma \ref{lem:phase-rotation}, we have $\bv^* = \bv^{**}$ up to an appropriate phase rotation.
%\begin{equation}
%\nonumber
%\begin{split}
%\bv^{**} \in{\rm arg}\min_{\bv} \; &\sum_{k=1}^K \|\bv_k\|^2\\
%\textrm{s.t.}\;\;\; &\sinr_k(\bv,\bu_k^*) \geq \gamma_k, \forall k,
%\end{split}
%\end{equation}
%which together with \eqref{eq:proptemp5} imply that $\bv^* = \bv^{**}$ after doing the appropriate phase %rotation according to Lemma \ref{lem:phase-rotation}.

Next, with $\bv^*=\bv^{**}$, we prove that the limit point $(\bu^*,\bv^*)$ is a KKT point of (P1). Based on the receive beamformer update rule in the algorithm,  $\bu_k^{r_j} = (\mathbf{C}_k^{r_j-1})^{-1} \bH_k \bv_k^{r_j-1}$, where $\mathbf{C}_k^{r_j - 1} := \sum_{\ell\neq k} \bH_k \bv_\ell^{r_j-1}(\bv_\ell^{r_j-1})^H \bH_k^H + \sigma_k^2\mathbf{I}$. Letting $j \rightarrow \infty$ implies
\begin{equation}
\label{eq:propOptu}
\bu_k^{*} = (\mathbf{C}_k^{*})^{-1} \bH_k \bv_k^{*}\;\forall k
\end{equation}
with $\mathbf{C}_k^*=\sum_{j\neq
k}\mathbf{H}_k\bv_j^*(\bv_j^*)^H\mathbf{H}_k^H
+\sigma_k^2\mathbf{I}$. On the other hand, \eqref{eq:proptemp5} implies that there exists a set of multipliers $\lambda_k^* \geq 0$ so that
\begin{align}
%&\nabla_{\bu_k} L = 2\lambda_k\left(\gamma_k\sum_{j\neq k} \mathbf{H}_k\bv_j\bv_j^H\mathbf{H}_k^H +\gamma_k\sigma_k^2\mathbf{I}-\mathbf{H}_k\bv_k\bv_k^H\mathbf{H}_k^H\right)\bu_k=0,\label{neq:KKT1}\\
&\left(\mathbf{I}{-}\frac{\lambda_k^*}{\gamma_k}\mathbf{H}_k^H\bu_k^*(\bu_k^*)^H\mathbf{H}_k{+}\sum_{j\neq k}\lambda_j^*\mathbf{H}_j^H\bu_j^*(\bu_j^*)^H\mathbf{H}_j\right)\bv_k^*{=}0,\;\forall k,\label{eq:bk0}\\
&\gamma_k\left(\sum_{j\neq k}\|(\bu_k^*)^H\mathbf{H}_k\bv_j^*\|^2{+}\sigma_k^2\|\bu^*_k\|^2\right){-}|(\bu_k^*)^H \mathbf{H}_k\bv_k^*|^2{=}0,\;\forall k.\label{eq:bk1}
\end{align}
By Lemma \ref{lem:lemma4-prop1}, we infer from  \eqref{eq:propOptu} and \eqref{eq:bk1} that
\begin{equation}
\label{eq:fk1}
\left(\mathbf{C}_k^*-\frac{1}{\gamma_k}\mathbf{H}_k\bv_k^*(\bv_k^*)^H \mathbf{H}_k^H\right)\bu_k^*=\bm{0},\;\forall k.
\end{equation}
Clearly, the equations \eqref{eq:bk0}, \eqref{eq:bk1} and \eqref{eq:fk1} imply
that  the limit point $(\bu^*,\bv^*)$ is a KKT point of (P1).
\end{proof}

\begin{figure}[t]
\centering
\includegraphics[width=4.5in]{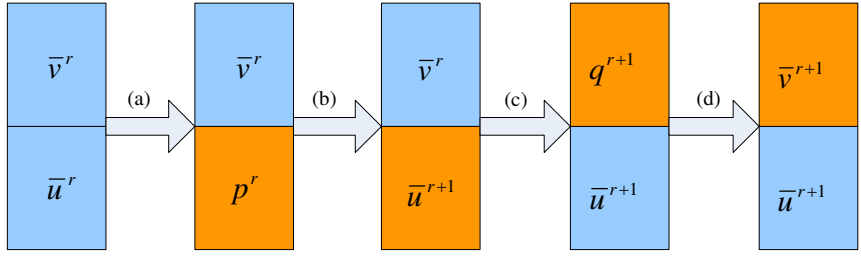}%{flow_chart_UDD.eps}
\caption{The flow chart of the UDD algorithm.}
\label{fig:udd-flow-chart}
\end{figure}
\subsection{Convergence of the UDD Algorithm}
%Based on Remark \ref{rem:remak-secIV}, we prove the convergence result of the UDD algorithm by showing that every limit point of the UDD algorithm satisfy with equality the SINR constraints of both the downlink and virtual uplink power minimization problem, and moreover fulfills both the downlink and uplink MMSE receiver structure.
The following theorem states our convergence result of the UDD algorithm:

\begin{prop} \emph{Let $\{(\bar{\bv}^r, {p}^r,\bar{\bu}^r,{q}^r)\}_{r=1}^{\infty}$ denote a sequence generated by the UDD algorithm. Suppose $(\bar{\bv}^0,{p}^0,\bar{\bu}^0,{q}^0)$ is feasible for problem (P1), then every limit point of $\{(\bar{\bv}^r,{p}^r,\bar{\bu}^r,{q}^r)\}_{r=1}^{\infty}$ is a KKT point of (P1).}
\end{prop}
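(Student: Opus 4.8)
The plan is to mirror the proof of Proposition~\ref{prop1-secIV}, now using the uplink--downlink duality theorem together with Lemmas~\ref{lem:lemma4-prop1} and \ref{lem:lemma5-prop2} and Remark~\ref{rem:remak-secIV}. First I would sharpen the monotonicity result of \cite{Codreanu2007} into the chain $\sum_k p_k^r\le\sum_k\sigma_k^2 q_k^r\le\sum_k p_k^{r-1}$, which follows from the uplink--downlink duality theorem together with the fact that each MMSE beamformer update (the downlink MMSE in Steps~5--6 and the virtual uplink MMSE in Steps~9--10) can only improve every SINR and hence cannot increase the power allocated in the following step. Therefore the common bound $\sum_k p_k^r$ is nonincreasing and converges to some $P^*$; since the objective is coercive and the normalized beamformers lie on unit spheres, all iterates are bounded. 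I would then pass to a subsequence $\{r_j\}$ along which $(\bar\bv^{r_j},p^{r_j},\bar\bu^{r_j},q^{r_j})\to(\bar\bv^*,p^*,\bar\bu^*,q^*)$ and, after a further refinement, $\bv^{r_j-1}=\sqrt{p^{r_j-1}}\,\bar\bv^{r_j-1}\to\bv^{**}$.

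The core is to show that this limit is a fixed point of the beamformer updates and, as in Proposition~\ref{prop1-secIV}, that $\bv_k^{**}=\bv_k^*$ up to a per-user phase. Since $\sum_k p_k^{r-1}-\sum_k p_k^r$ telescopes to zero, both gaps in the chain vanish along the sequence. Vanishing of $\sum_k p_k^{r-1}-\sum_k\sigma_k^2 q_k^r$ forces, at the limit, $p^{**}$ to be the (unique) downlink power allocation for the normalized beamformers $(\bar\bv^{**},\bar\bu^*)$, so the downlink SINR constraints are tight at $(\bv^{**},\bar\bu^*)$. Vanishing of $\sum_k\sigma_k^2 q_k^r-\sum_k p_k^r$ forces $\bar\bv^{**}$ with powers $q^*$ to attain the same weighted uplink power as the virtual uplink MMSE beamformers $\bar\bv^*$; since both make every uplink SINR equal to $\gamma_k$ and the MMSE receiver is the unique SINR maximizer, $\bar\bv_k^{**}=\bar\bv_k^*$ up to phase, and then equality of the (unique) downlink power allocations gives $\bv_k^{**}=\bv_k^*$ up to phase. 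Meanwhile, passing to the limit in Steps~5--6 gives $\bar\bu_k^*\propto(\bC_k^*)^{-1}\bH_k\bv_k^*$ (downlink MMSE receiver for $\bv^*$), in Steps~9--10 gives $\bar\bv_k^*\propto(\bD_k^*)^{-1}\bH_k^H\bar\bu_k^*$ (virtual uplink MMSE receiver), and Steps~7 and 11 force the uplink and downlink SINR constraints at the limiting beamformers to hold with equality. Hence $(\{\bar\bu_k^*\},\{\bv_k^*\})$ satisfies the hypotheses of Lemma~\ref{lem:lemma4-prop1} and of Lemma~\ref{lem:lemma5-prop2}: the former yields \eqref{neq:KKT1}, the latter yields \eqref{neq:KKT2} with Lagrange multipliers $\lambda_k=q_k^*$, tightness of the downlink SINR gives \eqref{neq:KKT3} and \eqref{neq:KKT4}, and $\lambda_k=q_k^*\ge0$ (in fact $>0$, since $\gamma_k>0$) gives \eqref{neq:KKT5}. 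By Remark~\ref{rem:remak-secIV}, the limit point is then a KKT point of (P1).

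The hard part will be the middle step: unlike the MMSE-SOCP algorithm, the UDD algorithm never directly minimizes $\sum_k\|\bv_k\|^2$ over the transmit beamformers, so proving $\bv_k^{**}=\bv_k^*$ requires carefully exploiting that the per-iteration power decrease splits through the dual uplink, that each power-allocation subproblem has a unique solution (\cite[Lemmas 1 \& 2]{Song2007}), and that the MMSE receiver is the unique SINR maximizer, combined with continuity of the power-allocation maps at the limit and the standard scaling argument on the boundary of the feasible SINR region (as in the footnote to Proposition~\ref{prop1-secIV}) to conclude, via Lemma~\ref{lem:phase-rotation}, that the limiting transmit beamformers solve the transmit-beamformer power-minimization problem with $\{\bar\bu_k^*\}$ fixed. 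Once this is established, the rest is a routine combination of Lemmas~\ref{lem:phase-rotation}--\ref{lem:lemma5-prop2}.
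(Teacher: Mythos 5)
Your plan is correct and follows the same overall skeleton as the paper's proof: the monotonicity chain $\sum_k p_k^{r+1}\le\sum_k\sigma_k^2 q_k^{r+1}\le\sum_k p_k^{r}\le\sum_k\sigma_k^2q_k^r$, boundedness and subsequence extraction for two consecutive iterates, vanishing of the power gaps, identification of the consecutive limits up to phase, passing to the limit in the MMSE updates, and then Lemmas \ref{lem:lemma4-prop1}--\ref{lem:lemma5-prop2} with Remark \ref{rem:remak-secIV} to assemble \eqref{eqKKT} (your $\lambda_k=q_k^*$ with normalized receivers is the same certificate as the paper's $\lambda_k=1$ with $\bu_k^*=\sqrt{q_k^*}\bar{\bu}_k^*$). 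Where you genuinely diverge is the middle "fixed-point" step: you identify consecutive \emph{transmit} beamformers ($\bv^{**}=\bv^*$) via uplink MMSE uniqueness and then route through Lemma \ref{lem:phase-rotation} and SOCP-optimality of $\bv^*$, whereas the paper identifies consecutive \emph{receive} beamformers ($\vartheta\bar{\bu}^{**}=\bar{\bu}^*$) via downlink MMSE uniqueness and then gets $q^{**}=q^*$ from uniqueness of the uplink power allocation \eqref{eq:uplink}, never needing Lemma \ref{lem:phase-rotation} or the claim that $\bv^*$ solves the transmit SOCP (that claim is true at the limit but is a consequence, not a needed intermediate). The two routes are duals of each other and both work; the paper's is slightly leaner. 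One caution: your assertion that "both $\bar{\bv}^{**}$ and $\bar{\bv}^*$ make every uplink SINR equal to $\gamma_k$" is exactly where the real work hides --- for the MMSE one it only gives $\ge\gamma_k$ a priori, and forcing equality requires the uplink--downlink duality contradiction the paper spells out (construct $\hat q$ achieving the strictly larger SINRs with total weighted power $\sum_k p_k^*$, then strictly decrease it, contradicting $\sum_k p_k^*=\sum_k\sigma_k^2q_k^{**}$ and the limiting optimality of the power allocation). You flag this as the hard part and name the right tools, so the plan is sound, but that contradiction argument is the step you must actually write out.
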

\begin{proof}
For the ease of understanding, the iterations of the UDD algorithm are visually presented in Fig. \ref{fig:udd-flow-chart} where each arrow indicates an update rule as labeled. Let $$\{\bar{\bv}^r, p^r, \bar{\bu}^r, q^r\}\triangleq\left\{\{\bar{\bv}_k^r\}_{k=1}^K, \{p^r_k\}_{k=1}^K, \{\bar{\bu}_k^r\}_{k=1}^K, \{q^r_k\}_{k=1}^K\right\}$$ be the sequence generated by the UDD algorithm. Clearly,  the sequences $\{\bar{\bv}^{r}\}$ and $\{\bar{\bu}^{r}\}$ are bounded. On the other hand, since the objective functions of problems \eqref{eq:downlink} and \eqref{eq:uplink} are both coercive, $\{p^r\}$ and $\{q^r\}$ are also bounded. Hence, the sequence $\{\bar{\bv}^r, p^r, \bar{\bu}^r, q^r\}$ is bounded. It follows that there exists a subsequence $\{\bar{\bv}^{r_j}, p^{r_j}, \bar{\bu}^{r_j}, q^{r_j}\}$ converging to a limit point $\{\bar{\bv}^*, p^*, \bar{\bu}^*, q^*\}$. Furthermore, by further restricting to a subsequence, we can assume that the subsequence $\{\bar{\bv}^{r_j+1}, p^{r_j+1}, \bar{\bu}^{r_j+1}, q^{r_j+1}\}$ converges to some limit point $\{\bar{\bv}^{**}, p^{**}, \bar{\bu}^{**}, q^{**}\}$. According to \cite{Codreanu2007}, we have the monotonic convergence, i.e., $0\leq \sum_{k=1}^Kp_k^{r+1}\leq \sum_{k=1}^K \sigma_k^2q_k^{r+1}\leq \sum_{k=1}^K p_k^{r}\leq \sum_{k=1}^K \sigma_k^2q_k^{r}$, implying that $\sum_{k=1}^Kp_k^{**}= \sum_{k=1}^K \sigma_k^2q_k^{**}= \sum_{k=1}^K p_k^{*}= \sum_{k=1}^K \sigma_k^2q_k^{*}$.

First, we prove that $p^*$ is the optimal solution of problem \eqref{eq:downlink}  with $\bar{\bu}_k$'s and $\bar{\bv}_k$'s being fixed to $\bar{\bu}^*_k$'s and $\bar{\bv}^*_k$'s respectively, and consequently  $\sinr_k^{\textrm{D}}(p^*, \bar{\bv}^*, \bar{\bu}_k^*)= \gamma_k$, $\forall k\in\mathcal{K}$. Let $S_p=\{p~|~\sinr_k^{\textrm{D}}(p, \bar{\bv}^*, \bar{\bu}_k^*)> \gamma_k, k\in \mathcal{K}\}$. Due to the continuity of the SINR functions, we can always find an integer $J$, such that for all $j\geq J$,
$$\sinr_k^{\textrm{D}}(p, \bar{\bv}^{r_j}, \bar{\bu}_k^{r_j})\geq \gamma_k, \forall k\in \mathcal{K}, \forall p\in S_p.$$
Step 11 of the UDD algorithm [see (a) in Fig. \ref{fig:udd-flow-chart}] implies
$$\sinr_k^{\textrm{D}}(p^{r_j}, \bar{\bv}^{r_j}, \bar{\bu}_k^{r_j})\geq \gamma_k, \forall k\in \mathcal{K}$$ and $$\sum_{k=1}^K p_k^{r_j}\leq \sum_{k=1}^K p_k, \forall p\in S_p.$$
Taking limit as $j\rightarrow\infty$ yields
\begin{equation}\label{eq:SINR_limit}\sinr_k^{\textrm{D}}(p^*, \bar{\bv}^*, \bar{\bu}_k^*)\geq \gamma_k, \forall k\in\mathcal{K}\end{equation} and
\begin{equation}\label{eq:optimum_limit}\sum_{k=1}^K p_k^*\leq \sum_{k=1}^K p_k, \forall p\in S_p.
\end{equation}
Since at the optimality of problem \eqref{eq:downlink} the SINR constraint must hold with equality, \eqref{eq:SINR_limit} and \eqref{eq:optimum_limit} implies
\begin{equation}
\label{eq:downlink_SINR_eq}
\sinr_k^{\textrm{D}}(p^*, \bar{\bv}^*, \bar{\bu}_k^*)= \gamma_k, \forall k\in\mathcal{K}.
\end{equation}

Similarly, we next show that $q^{**}$ is the optimal solution of problem \eqref{eq:uplink}  with $\bar{\bu}_k$'s and $\bar{\bv}_k$'s being fixed to $\bar{\bu}^{**}_k$'s and $\bar{\bv}^*_k$'s respectively, and moreover $\sinr_k^{\textrm{U}}(q^{**}, \bar{\bu}^{**}, \bar{\bv}_k^*)= \gamma_k$, $\forall k\in\mathcal{K}$. Let $S_q =\{ q~|~\sinr_k^{\textrm{U}}(q, \bar{\bu}^{**}, \bar{\bv}_k^{*})> \gamma_k, \forall k\in\mathcal{K}\}.$
Due to the continuity of the SINR functions, we can always find $J$, for all $j\geq J$, such that
$$\sinr_k^{\textrm{U}}(q, \bar{\bu}^{r_j+1}, \bar{\bv}_k^{r_j})\geq \gamma_k, \forall k\in\mathcal{K}, \forall q\in S_q.$$
Due to Step 7 of the UDD algorithm [see (c) in Fig. \ref{fig:udd-flow-chart}], we have
$$\sinr_k^{\textrm{U}}(q^{r_j+1}, \bar{\bu}^{r_j+1}, \bar{\bv}_k^{r_j})\geq \gamma_k, \forall k\in\mathcal{K}.$$
and
$$\sum_{k=1}^K \sigma_k^2q_k^{r_j+1}\leq \sum_{k=1}^K \sigma_k^2q, \forall q\in S_q.$$
Hence, we infer that
\begin{equation}\label{eq:optimum_dual_uplink}
\sum_{k=1}^K \sigma_k^2q_k^{**}\leq \sum_{k=1}^K \sigma_k^2 q, \forall q\in S_q
\end{equation}
and
\begin{equation}\label{eq:uplink_SINR_eq}
\sinr_k^{\textrm{U}}(q^{**}, \bar{\bu}^{**}, \bar{\bv}_k^{*})= \gamma_k, \forall k\in\mathcal{K}.
\end{equation}

Now we show $\bar{\bu}^*=\bar{\bu}^{**}$ after doing an appropriate phase rotation. Due to the feasible initialization, we have for any $r_j$ that [see (b) in Fig. \ref{fig:udd-flow-chart}]
$$\sinr_k^{\textrm{D}}(p^{r_j}, \bar{\bv}^{r_j}, \bar{\bu}_k^{r_j+1})\geq \gamma_k.$$
Taking limit as $j\rightarrow\infty$ yields
\begin{align}
\sinr_k^{\textrm{D}}(p^{*}, \bar{\bv}^{*}, \bar{\bu}_k^{**})\geq\gamma_k=\sinr_k^{\textrm{D}}(p^{*}, \bar{\bv}^{*}, \bar{\bu}_k^{*}),\forall k\in\mathcal{K}\label{eq:InequalityIntermediate}
\end{align}
where the equality is due to \eqref{eq:downlink_SINR_eq}. In the following, we show that the inequality in \eqref{eq:InequalityIntermediate} actually achieves equality. Define $\hat{\gamma}_k\triangleq\sinr_k^{\textrm{D}}(p^{*}, \bar{\bv}^{*}, \bar{\bu}_k^{**})$ and assume for contrary that there exists one $\hat{\gamma}_k$ that is strictly greater than $\gamma_k$. By the uplink-downlink duality theory, there exists $\{\hat{q}_k\}$ such that $\sinr_k^{\textrm{U}}(\hat{q}, \bar{\bu}^{**}, \bar{\bv}_k^{*})=\hat{\gamma}_k$ for all $k$ and $\sum_{k=1}^K \sigma_k^2\hat{q}_k=\sum_{k=1}^Kp_k^*$. Since there exists one $k$ for which $\hat{\gamma}_k>\gamma_k$, the total power $\sum_{k=1}^K \sigma_k^2\hat{q}_k$ can be further decreased by reducing $\hat{q}_k$. Hence, \eqref{eq:optimum_dual_uplink} and \eqref{eq:uplink_SINR_eq} imply $\sum_{k=1}^K \sigma_k^2q_k^{**}<\sum_{k=1}^K \sigma_k^2\hat{q}_k=\sum_{k=1}^K p_k^*$. This yields a contradiction due to the fact that $\sum_k p_k^*=\sum_k \sigma_k^2q_k^{**}$.
Hence, we have
\begin{equation}\label{eq:key_down_eq}
\sinr_k^{\textrm{D}}(p^{*}, \bar{\bv}^{*}, \bar{\bu}_k^{**})=\sinr_k^{\textrm{D}}(p^{*}, \bar{\bv}^{*}, \bar{\bu}_k^{*}),~ \forall k\in \mathcal{K}.
\end{equation}
Note that Steps 5-6 of the UDD algorithm [see (b) in Fig. \ref{fig:udd-flow-chart}] imply
\begin{equation*}%\label{eq:normalized_downlink_transmit_beamforer}
\bar{\bu}_k^{**}=\bar{\alpha}_k^*\left(\sum_{j\neq k}\mathbf{H}_k\bv_j^*(\bv_j^*)^H\mathbf{H}_k^H +\sigma_k^2\mathbf{I}\right)^{-1}\mathbf{H}_k\bv_k^*, \forall k\in \mathcal{K},
\end{equation*}
where $\bar{\alpha}_k^*$ is normalization factor. From the above equation, we infer that $\bar{\bu}_k^{**}$  maximizes $\sinr_k^{\textrm{D}}(p^{*}, \bar{\bv}^{*}, \bar{\bu}_k)$ with respect to $\bar{\bu}_k$. It follows from \eqref{eq:key_down_eq} that $\bar{\bu}_k^{*}$ also  maximizes $\sinr_k^{\textrm{D}}(p^{*}, \bar{\bv}^{*}, \bar{\bu}_k)$. Thus we have $\vartheta\bar{\bu}_k^{**}=\bar{\bu}_k^*, \forall k\in\mathcal{K}$ for some complex valued scalar $\vartheta$ with  $|\vartheta|=1$.

Next, we show $q^*=q^{**}$ and further $\sinr_k^{\textrm{U}}(q^*, \bar{\bu}^{*}, \bar{\bv}_k^{*})= \gamma_k$, $\forall k\in\mathcal{K}$. Step 9 of the UDD algorithm [see (d) in Fig. \ref{fig:udd-flow-chart}] implies $$\sinr_k^{\textrm{U}}(q^{r_j}, \bar{\bu}^{r_j}, \bar{\bv}_k^{r_j})\geq \gamma_k, \forall k\in\mathcal{K}.$$
Taking limit as $j\rightarrow\infty$, we have
$$
\sinr_k^{\textrm{U}}(q^{*}, \bar{\bu}^{*}, \bar{\bv}_k^{*})\geq \gamma_k, \forall k\in\mathcal{K}.
$$
and thus
\begin{equation}\label{eq:dual_uplink_sinr_inequality}
\sinr_k^{\textrm{U}}(q^{*}, \bar{\bu}^{**}, \bar{\bv}_k^{*})\geq \gamma_k, \forall k\in\mathcal{K}.
\end{equation}
by noting $\vartheta\bm{\baru}_k^{**}=\bm{\baru}_k^*$. Combining \eqref{eq:dual_uplink_sinr_inequality}, \eqref{eq:optimum_dual_uplink}, and $\sum_{k=1}^K \sigma_k^2q^*_k=\sum_{k=1}^K\sigma_k^2 q^{**}_k$, we infer that both $q_k^*$'s and $q_k^{**}$'s are the optimal solutions to problem \eqref{eq:uplink} with there $\bar{\bu}_k$'s and $\bar{\bv}_k$'s replaced by $\bar{\bu}_k^{**}$'s and $\bar{\bv}_k^*$'s respectively. Since problem \eqref{eq:uplink} has a unique solution\cite{Song2007}, we conclude $q_k^*=q_k^{**}$, $\forall k\in\mathcal{K}$, and
\begin{equation}\label{eq:uplink_SINR_eq2}
\sinr_k^{\textrm{U}}(q^*, \bar{\bu}^{*}, \bar{\bv}_k^{*})= \gamma_k, \forall k\in\mathcal{K}.
\end{equation}

Now we are ready to end up the proof. By noting $\vartheta\bar{\bu}_k^{**}=\bar{\bu}_k^*$ and $q^{**}=q^*$, Step 5-8 of the UDD algorithm implies
\begin{equation}\label{eq:downlink_transmit_beamforer}
\bm{u}_k^{*}=\alpha_k^*\left(\sum_{j\neq k}\mathbf{H}_k\bv_j^*(\bv_j^*)^H\mathbf{H}_k^H +\sigma_k^2\mathbf{I}\right)^{-1}\mathbf{H}_k\bv_k^*, \forall k\in \mathcal{K}.
\end{equation}
where $\alpha_k^*$ is a normalization factor. On the other hand, Step 9-12 of the UDD algorithm implies
\begin{equation}\label{eq:uplink_transmit_beamforer}\bm{v}_k^*=\beta_k^*\left(\mathbf{I}+\sum_{j\neq k}\mathbf{H}_j^H\bu_j^*(\bu_j^*)^H\mathbf{H}_j\right)^{-1}\mathbf{H}_k^H\bu_k^*, \forall k\in\mathcal{K}.
\end{equation}
where $\beta_k^*$ is a normalization factor.

Eqs. \eqref{eq:downlink_SINR_eq}, \eqref{eq:uplink_SINR_eq2}, \eqref{eq:downlink_transmit_beamforer}, and \eqref{eq:uplink_transmit_beamforer} can be equivalently written as
\begin{align}
&\frac{1}{\gamma_k}|(\bu_k^*)^H\mathbf{H}_k\bm{v}_k^*|^2-\sum_{j\neq k}\|(\bu_k^*)^H\mathbf{H}_k\bm{v}_j^*\|^2=\sigma_k^2\|\bu_k^*\|^2\label{eq:UDD_con1}\\
&\frac{1}{\gamma_k}|(\bv_k^*)^H\mathbf{H}_k^H\bm{u}_k^*|^2-\sum_{j\neq k}\|(\bv_k^*)^H\mathbf{H}_j^H\bm{u}_j^*\|^2=\|\bv_k^*\|^2\label{eq:UDD_con2}\\
&\bu_k^*=\alpha_k^*(\mathbf{C}_k^*)^{-1}\mathbf{H}_k\bv_k^*\label{eq:UDD_con3}\\
&\bv_k^*=\beta_k^*(\bD_k^*)^{-1}\mathbf{H}_k^H\bu_k^*\label{eq:UDD_con4}
\end{align}
where $\mathbf{C}_k^*=\sum_{j\neq k}\mathbf{H}_k\bv_j^*(\bv_j^*)^H\mathbf{H}_k^H +\sigma_k^2\mathbf{I}$, $\bD_k^*=\mathbf{I}+\sum_{j\neq k}\mathbf{H}_j^H\bu_j^*(\bu_j^*)^H\mathbf{H}_j$.% $\beta_k=\frac{\sqrt{q_k}}{\|\mathbf{C}_k^{-1}\mathbf{H}_k\bv_k\|}$, $\tau_k=\frac{\sqrt{p_k}}{\|\bD_k^{-1}\mathbf{H}_k^H\bu_k\|}$.

Combining Lemma \ref{lem:lemma4-prop1}, \eqref{eq:UDD_con1}, and \eqref{eq:UDD_con3}, we obtain
\begin{equation}\label{eq:UDD_KKT1}
\left(\mathbf{C}_k^*-\frac{1}{\gamma_k}\mathbf{H}_k\bv_k^*(\bv_k^*)^H\mathbf{H}_k^H\right)\bu_k^*=0
\end{equation}
Similarly, Lemma \ref{lem:lemma5-prop2}, \eqref{eq:UDD_con2}, and \eqref{eq:UDD_con4} imply
\begin{equation}\label{eq:UDD_KKT2}
\left(\mathbf{D}_k^*-\frac{1}{\gamma_k}\mathbf{H}_k^H\bu_k^*(\bu_k^*)^H\mathbf{H}_k\right)\bv_k^*=0
\end{equation}
It can be readily seen that, \eqref{eq:UDD_KKT1}, \eqref{eq:UDD_KKT2}, and \eqref{eq:UDD_con1} implies the KKT condition \eqref{eqKKT} with $\lambda_k=1$ for all $k$. Thus the proof is completed.
\end{proof}

\begin{remark}\label{rem:feasibility}
Although both algorithms require feasible initialization, it is easier for the MMSE-DUAL algorithm than the UDD algorithm to obtain a feasible initialization. For example, when $M\geq K$, it is guaranteed that problem (P1) with any given nonzero $\bu_k$'s is feasible (e.g., zero-forcing solution for $\bv_k$'s) and thus the MMSE-DUAL algorithm can be randomly initialized in this case. However, random initialization for the UDD algorithm in this case may fail. This is also verified with a specific example in Section VI.
\end{remark}

\section{Extension to Multiple Stream Case}
Now we consider the extension of the two algorithms to the multiple stream case. Differently from \cite{Khachan2006}, we assume that joint detection is employed at receivers. Let $\bV_k$ be the transmit beamformer for user $k$. In this case, each user's achieved rate is given by
$$R_k\triangleq\log\det\left(\bI+\bH_k\bV_k\bV_k^H\bH_k^H\bOmega_k^{-1}\right)$$
where $\bOmega_k\triangleq\sigma_k^2\bI+\sum_{j\neq k}\bH_k\bV_j\bV_j^H\bH_k^H$ is the interference plus noise covariance matrix.
We are interested in solving the following rate constrained power minimization problem
\begin{equation}
\label{eq:msc_prob1}
\begin{split}
&\min_{\bV} \sum_{k=1}^K\trace(\bV_k\bV_k^H)\\
&\st~ R_k\geq r_k, k\in\mathcal{K}
\end{split}
\end{equation}
where $r_k$ represents the rate requirement for user $k$.

It is known that the streams of user $k$ can be decoded sequentially without loss of information using MMSE receiver coupled with sequential interference cancelation (SIC) technique\cite{Varanasi1997,Liuan2011}. Indeed, it is easily verified that
\begin{equation}
R_{k}=\sum_{m=1}^{d_{k}}\log(1+\textrm{SINR}_{k,m})
\end{equation}
where
$$\textrm{SINR}_{k,m}\triangleq\frac{|\bu_{k,m}^H\bH_k\bv_{k,m}|^2}{\bu_{k,m}^H\left(\sigma_k^2\bI+\sum_{j\neq k}\sum_{i=1}^{d_j}\bH_k\bv_{j,i}\bv_{j,i}^H\bH_k^H+\sum_{i=m+1}^{d_k}\bH_k\bv_{k,i}\bv_{k,i}^H\bH_k^H\right)\bu_{k,m}},$$
$\bv_{k,m}$ is the $m$-th column of $\bV_{k}$, i.e., the transmit beamformer for stream $m$, and
$$\bu_{k,m}=\left(\sigma_k^2\bI+\sum_{j\neq k}\sum_{i=1}^{d_j}\bH_k\bv_{j,i}\bv_{j,i}^H\bH_k^H+\sum_{i=m+1}^{d_k}\bH_k\bv_{k,i}\bv_{k,i}^H\bH_k^H\right)^{-1}\bH_{k}\bv_{k,m}$$ is referred to as MMSE-SIC receiver.

Define $\gamma_{k,m}\triangleq e^{\frac{r_{k}}{d_{k}}}-1$, $m=1,2,\ldots, d_k$. With equal rate allocation $\frac{r_{k}}{d_{k}}$ across multiple streams, Liu. et. al\cite[Theorem 4]{Liuan2011} proved that the following SINR-constrained power minimization problem %{\color{red}[[explain equal rate allocation]]}
\begin{equation}
\label{eq:msc_prob2}
\begin{split}
&\min_{\bu,\bv} \sum_{k=1}^K \sum_{m=1}^{d_k} \|\bv_{k,m}\|^2\\
&\st~ \frac{|\bu_{k,m}^H\bH_k\bv_{k,m}|^2}{\bu_{k,m}^H\left(\sigma_k^2\bI+\sum_{j\neq k}\sum_{i=1}^{d_j}\bH_k\bv_{j,i}\bv_{j,i}^H\bH_k^H+\sum_{i=m+1}^{d_k}\bH_k\bv_{k,i}\bv_{k,i}^H\bH_k^H\right)\bu_{k,m}}\geq \gamma_{k,m},\\
&~~~~~~~~~~~m=1,2,\ldots,d_k, k\in \mathcal{K}.
\end{split}
\end{equation}
can achieve the same optimal solution as that of problem \eqref{eq:msc_prob1}. However, we still cannot solve problem \eqref{eq:msc_prob2} to global optimality. Fortunately, the special structure of Problem \eqref{eq:msc_prob2} allows us to apply the UDD algorithm or MMSE-DUAL algorithm. Since the UDD algorithm or MMSE-DUAL algorithm reaches a KKT point of problem \eqref{eq:msc_prob2}, a question arises: whether a KKT point of problem \eqref{eq:msc_prob2} is also a KKT point of problem \eqref{eq:msc_prob1}. In the following proposition, we show the KKT equivalence of the two problems under a mild set of conditions. The proof is relegated to appendix.
 %that, using UDD/MMSE-DUAL algorithm to solve problem \eqref{eq:msc_prob2} can even reach the KKT solution of the original problem \eqref{eq:msc_prob1}.  The proof is relegated to the Appendix.
\begin{prop}\label{prop:prop3}
Let $\{\bv_{k,m},\bu_{k,m}\}$ be a KKT point of problem \eqref{eq:msc_prob2}, and let
$$\bPsi_{k,m}\triangleq\bOmega_k+\sum_{i=m}^{d_k} \bH_k\bv_{k,i}\bv_{k,i}^H\bH_k^H.$$ Suppose $\{\bv_{k,m}\}$ satisfies $\bv_{k,1}^H\bH_k^H\bPsi_{k,m}^{-1}\bH_k\bv_{k,m}\neq 0$, $m=2,3,\ldots, d_k$, $k\in\mathcal{K}$, then
\begin{enumerate}
\item $\{\bv_{k,m}\}$  is a KKT point of problem \eqref{eq:msc_prob1}.
\item A KKT point of \eqref{eq:msc_prob1} can be obtained by solving Problem \eqref{eq:msc_prob2} using either the MMSE-DUAL or the UDD algorithm.
\end{enumerate}
\end{prop}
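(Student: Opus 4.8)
The plan is to take a KKT tuple $(\{\bv_{k,m}\},\{\bu_{k,m}\},\{\lambda_{k,m}\})$ of problem \eqref{eq:msc_prob2}, where $\lambda_{k,m}$ is the multiplier of the $(k,m)$-th SINR constraint, and to exhibit multipliers $\{\nu_k\}$ for \eqref{eq:msc_prob1} under which $\{\bv_{k,m}\}$ satisfies the KKT system of \eqref{eq:msc_prob1}. First I would reproduce the arguments of Lemmas~\ref{lemma:lambdapositive}--\ref{lemma:Bkvk} in this setting. Pre-multiplying the $\bv_{k,m}$-stationarity condition of \eqref{eq:msc_prob2} by $\bv_{k,m}^H$ shows $\lambda_{k,m}>0$ (if $\lambda_{k,m}=0$ then $\bv_{k,m}=\bzero$, impossible since $\gamma_{k,m}>0$), so every SINR constraint is tight and $\sinr_{k,m}=\gamma_{k,m}=e^{r_k/d_k}-1=:\gamma_k$, the same for all $m$. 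Substituting into $R_k=\sum_m\log(1+\sinr_{k,m})$ gives $R_k=r_k$, so $\{\bv_{k,m}\}$ is feasible for \eqref{eq:msc_prob1} with all rate constraints active. The $\bu_{k,m}$-stationarity condition, divided by $\lambda_{k,m}>0$, forces $\bu_{k,m}\propto\bPsi_{k,m+1}^{-1}\bH_k\bv_{k,m}$; since the KKT system of \eqref{eq:msc_prob2} is unchanged if the scale of $\bu_{k,m}$ is absorbed into $\lambda_{k,m}$, I would fix the normalization $\bu_{k,m}=\bPsi_{k,m+1}^{-1}\bH_k\bv_{k,m}$, so that $\bu_{k,m}^H\bH_k\bv_{k,m}=\sinr_{k,m}=\gamma_k$.

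Next I would record the telescoping identities this normalization produces. By Sherman--Morrison applied to $\bPsi_{k,m}=\bPsi_{k,m+1}+\bH_k\bv_{k,m}\bv_{k,m}^H\bH_k^H$, and because equal-rate allocation makes the coefficient $1/(1+\gamma_{k,m})$ independent of $m$, one gets $\bPsi_{k,m}^{-1}=\bPsi_{k,m+1}^{-1}-\frac{1}{1+\gamma_k}\bu_{k,m}\bu_{k,m}^H$, hence $\bPsi_{k,m}^{-1}\bH_k\bv_{k,m}=\frac{1}{1+\gamma_k}\bu_{k,m}$, $\bPsi_{k,1}^{-1}\bH_k\bv_{k,m}=\frac{1}{1+\gamma_k}\big(\bu_{k,m}-\sum_{m'<m}(\bu_{k,m'}^H\bH_k\bv_{k,m})\,\bu_{k,m'}\big)$, and $\bOmega_k^{-1}-\bPsi_{k,1}^{-1}=\frac{1}{1+\gamma_k}\sum_{i=1}^{d_k}\bu_{k,i}\bu_{k,i}^H$. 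Since $\bOmega_k$ does not involve $\{\bv_{k,m}\}$, while $\bOmega_\ell$ for $\ell\neq k$ depends on $\bv_{k,m}$ only through $\bH_\ell\bv_{k,m}\bv_{k,m}^H\bH_\ell^H$, these give the closed forms $\nabla_{\bv_{k,m}}R_k=\bH_k^H\bPsi_{k,1}^{-1}\bH_k\bv_{k,m}$ and $\nabla_{\bv_{k,m}}R_\ell=-\bH_\ell^H(\bOmega_\ell^{-1}-\bPsi_{\ell,1}^{-1})\bH_\ell\bv_{k,m}$, which are the building blocks of the stationarity condition of \eqref{eq:msc_prob1}. In parallel I would write out the $\bv_{k,m}$-stationarity of \eqref{eq:msc_prob2}, keeping careful track of the SIC structure: $\bv_{k,m}$ appears in the numerator of constraint $(k,m)$, in the interference term of constraints $(k,m')$ with $m'<m$ only, and in the interference term of every constraint $(\ell,i)$ with $\ell\neq k$. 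Using $\bu_{k,m}^H\bH_k\bv_{k,m}=\gamma_k$, this reads $\bD_{k,m}\bv_{k,m}=\lambda_{k,m}\gamma_k\bH_k^H\bu_{k,m}$ with $\bD_{k,m}=\bI+\sum_{\ell\neq k}\sum_i\gamma_\ell\lambda_{\ell,i}\bH_\ell^H\bu_{\ell,i}\bu_{\ell,i}^H\bH_\ell+\gamma_k\sum_{m'<m}\lambda_{k,m'}\bH_k^H\bu_{k,m'}\bu_{k,m'}^H\bH_k$; note that $\bD_{k,1}$ is independent of $m$.

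The heart of the argument is to show that $\lambda_{k,1}=\lambda_{k,2}=\cdots=\lambda_{k,d_k}$ for every $k$ — this is exactly where the non-degeneracy hypothesis is used, after rewriting it (via $\bPsi_{k,m}^{-1}\bH_k\bv_{k,m}=\frac{1}{1+\gamma_k}\bu_{k,m}$) as $\bv_{k,1}^H\bH_k^H\bu_{k,m}\neq0$ for $m=2,\ldots,d_k$. I would argue by induction on $m$: pre-multiplying the $\bv_{k,m}$-stationarity by $\bv_{k,1}^H$, using the $m=1$ equation $\bv_{k,1}^H\bD_{k,1}=\lambda_{k,1}\gamma_k\bu_{k,1}^H\bH_k$, and invoking the induction hypothesis gives
\[
\lambda_{k,1}\Big[(1+\gamma_k)\,\bu_{k,1}^H\bH_k\bv_{k,m}+\sum_{m'=2}^{m-1}(\bu_{k,m'}^H\bH_k\bv_{k,m})(\bv_{k,1}^H\bH_k^H\bu_{k,m'})\Big]=\lambda_{k,m}\,\bv_{k,1}^H\bH_k^H\bu_{k,m}.
\]
The telescoped expression for $\bPsi_{k,2}^{-1}$ shows that the bracket on the left collapses to exactly $\bv_{k,1}^H\bH_k^H\bu_{k,m}$, so $(\lambda_{k,1}-\lambda_{k,m})\,\bv_{k,1}^H\bH_k^H\bu_{k,m}=0$, and the hypothesis forces $\lambda_{k,m}=\lambda_{k,1}$. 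Finally I would set $\nu_k:=\gamma_k(1+\gamma_k)\lambda_{k,1}>0$; substituting $\lambda_{k,m}=\lambda_{k,1}$, the normalization $\bu_{k,m}^H\bH_k\bv_{k,m}=\gamma_k$ and the telescoping identities of the second paragraph into the \eqref{eq:msc_prob2}-stationarity, one verifies term by term that it becomes $\bv_{k,m}=\nu_k\bH_k^H\bPsi_{k,1}^{-1}\bH_k\bv_{k,m}-\sum_{\ell\neq k}\nu_\ell\bH_\ell^H(\bOmega_\ell^{-1}-\bPsi_{\ell,1}^{-1})\bH_\ell\bv_{k,m}$, which is precisely the stationarity condition of \eqref{eq:msc_prob1}. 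Together with $R_k=r_k$ (primal feasibility and complementary slackness) and $\nu_k>0$ (dual feasibility), this proves part~1; part~2 then follows because, as noted before the statement, the MMSE-DUAL and UDD algorithms applied to \eqref{eq:msc_prob2} reach a point satisfying its KKT system, which part~1 converts into a KKT point of \eqref{eq:msc_prob1}. I expect the inductive equal-multiplier step to be the main obstacle: it is the one place where the SIC chain ordering and the precise form of the hypothesis are genuinely needed, and obtaining the telescoping cancellation that collapses the bracket requires care; everything else is Sherman--Morrison bookkeeping and carefully tracking which constraints each $\bv_{k,m}$ enters.
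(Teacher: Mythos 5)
Your proposal is correct and follows essentially the same route as the paper's appendix proof: MMSE structure of the receivers from the $\bu_{k,m}$-stationarity, Sherman--Morrison/Woodbury telescoping of the $\bPsi_{k,m}^{-1}$'s, left-multiplication of the $\bv_{k,m}$-stationarity by $\bv_{k,1}^H$ together with the non-degeneracy hypothesis to force $\lambda_{k,1}=\cdots=\lambda_{k,d_k}$, and identification of the resulting equation with the $\log\det$ gradients of problem \eqref{eq:msc_prob1}. The only differences are cosmetic (your normalization of $\bu_{k,m}$ and of the constraints rescales the multipliers to $\nu_k=\gamma_k(1+\gamma_k)\lambda_{k,1}$ instead of $\lambda_k=\lambda_{k,1}$, and you make the tightness/positivity step and the induction explicit where the paper is terser), and your bracket-collapse identity checks out.
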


\begin{remark}
It can be shown that the condition $\bv_{k,1}^H\bH_k^H\bPsi_{k,m}^{-1}\bH_k\bv_{k,m}\neq 0$, $m=2,3,\ldots, d_k$, $k\in\mathcal{K}$ is equivalent to $\bu_{k,m}^H\bH_k\bv_{k,1}\neq 0$, $m=2,3,\ldots, d_k$, $k\in\mathcal{K}$; see \eqref{eq:lem-key3} in Appendix. This means that, for each user $k$, detection of all the second, third, ..., and the $d_k$-th symbols are  interfered by the first symbol, which is generally true in the multi-stream scenario.
\end{remark}

\begin{remark} In the proof of Proposition \ref{prop:prop3}, the assumption $\bv_{k,1}^H\bH_k^H\bPsi_{k,m}^{-1}\bH_k\bv_{k,m}\neq 0$, $m=2,3,\ldots, d_k$, $k\in\mathcal{K}$ is used to obtain $\lambda_{k,1}=\lambda_{k,2}=\ldots=\lambda_{k,d_k}$, $k\in\mathcal{K}$, which finally leads to Part 1). It is worthy mentioning that, if this assumption is relaxed, there indeed exists some example (shown in the end of this paper)
where $\lambda_{k,1}=\lambda_{k,2}=\ldots=\lambda_{k,d_k}$, $k\in\mathcal{K}$ may not hold and in this case the KKT point $\{\bv_{k,m}\}$ of problem \eqref{eq:msc_prob2} is not a KKT point of problem \eqref{eq:msc_prob1}.
\end{remark}

\begin{remark}
Let $\bar{K}=\sum_{k=1}^K d_k$. Similar to the single stream case, it can be shown that, for the multi-stream case, the complexity of each fixed point iteration in the MMSE-DUAL algorithm is\footnote{Note that in the multi-stream case the matrix inversion operations in the fixed point iteration can still be recursively computed. This is why we have the term $O(M^3)$ as in the single stream case.} $O(\bar{K}M^2+M^3)$, while solving the system of equations for the $\bar{K}$ dual variables in the UDD algorithm requires the complexity of $O(M\bar{K}^2+\bar{K}^3)$. Hence, in the general multi-stream case where $\sum_{k=1}^K d_k>M$, the MMSE-DUAL algorithm has a lower complexity than the UDD algorithm.
\end{remark}

\section{Simulation results}
In this section, we numerically evaluate the performance of the
MMSE-DUAL algorithm and the UDD algorithm. Note that, we here only provide the convergence performance
of the two algorithms in the single stream case but similar convergence behavior is also observed in the multiple stream case.
The QoS level and noise variance are set equally to be $\gamma$ and $\sigma^2$ across all
the users. Uncorrelated fading channel model for all channel
matrices between users and the BS are assumed. Each channel
coefficient is generated from the zero mean complex Gaussian
distribution with unit variance. We also set  $N=20$ for the number
of  fixed point iterations.

\subsection{Infeasible initialization}
In our first numerical experiment, we
study the effect of the infeasible initialization on the algorithms.
In the experiment, the BS serves two users, all equipped with two antennas,
i.e., $M=N=K=2$. We set $\gamma=10$, $\sigma^2=1$, and the channel
matrices between the BS and the two users are respectively
\begin{equation}
\mathbf{H}_1=\left[\begin{array}{cc}
 0.2097 + 0.0429i & 0.4385 + 0.1650i\\
-0.9788 + 0.1614i & 0.1543 + 0.5013i
\end{array}\right],
\end{equation}

\begin{equation}
\mathbf{H}_2=\left[\begin{array}{cc}
 -1.0800 - 0.3203i & 0.2582 + 0.1785i\\
0.1714 - 0.2729i & -0.9692 - 0.1711i
\end{array}\right].
\end{equation}

With initial transmit beamformers $$\bv_1=[0.0701 {+} 0.7443i ~~-0.3386 {+} 0.0235i]^T,$$
$$\bv_2=[-1.3709 {+} 2.0320i~~0.1491 {-} 0.0298i]^T,$$
the corresponding normalized MMSE receivers are calculated as
$$\bar{\bu}_1=[-0.7423{-}0.1885i~~-0.2951 {-} 0.5713i]^T,$$
$$\bar{\bu}_2=[0.7580 {-} 0.6429i~~-0.1084 {+} 0.0209i]^T,$$
leading to SINR values, $\sinr_1=0.1592$ and $\sinr_2=4.3871$, which are both smaller than the required SINR value $\gamma=10$. Furthermore, it can be easily checked that the linear system on $q_1$ and $q_2$ with the fixed initial beamformers
$$\frac{1}{\gamma_k}q_k|\bv_k^H\mathbf{H}_k^H\bar{\bm{u}}_k|^2-\sum_{j\neq k}q_j\|\bv_k^H\mathbf{H}_j^H\bar{\bm{u}}_j\|^2=\|\bv_k\|^2,k=1,2$$
has negative solutions $q_1=-3.5627$ and $q_2=-1.1379$. Therefore, the uplink power update cannot be done in the UDD algorithm. While the UDD algorithm fails with this initialization, the MMSE-DUAL algorithm can quickly reach a feasible point in the second iteration and then exhibits a monotonic convergence in subsequent iterations, as shown in Fig. \ref{fig:fig2}.
\begin{figure}[t]
\centering
\includegraphics[width=2.7in]{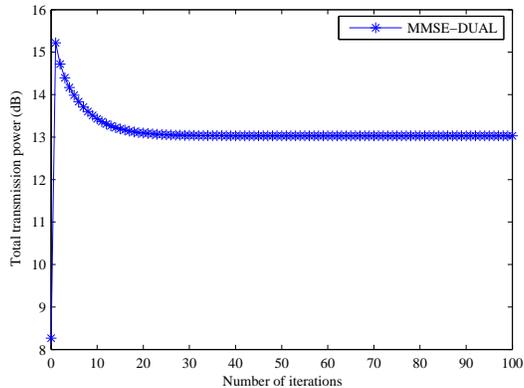}%{UDD_infeasible.eps}
\caption{Infeasible initialization of MMSE-DUAL algorithm}
\label{fig:fig2}
\end{figure}

\subsection{Convergence property}
In this set of numerical experiments, we randomly initialize the MMSE-DUAL algorithm. The UDD algorithm is initialized by a feasible point obtained by few iterations of the MMSE-DUAL algorithm. Figure \ref{fig:fig3} shows that the MMSE-DUAL algorithm and the UDD algorithm have a very similar convergence behavior.
\begin{figure}[t]
\centering
\includegraphics[width=2.7in]{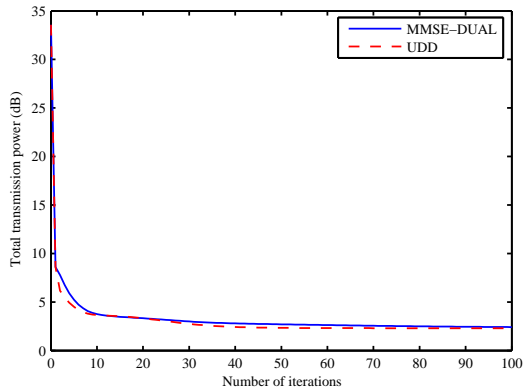}
\caption{Similar behavior of MMSE-DUAL and UDD algorithm: $K=4$, $M=7$, $N=3$, $\gamma=10$, $\sigma^2=1$.}
\label{fig:fig3}
\end{figure}

Figures \ref{fig:fig4} and \ref{fig:fig5} show that the two algorithms can almost always converge to a same objective
function value which may be global optimum regardless of initial points (different initial points are denoted
by circles). However, in an extremely rare case, local convergence for the MMSE-DUAL algorithm (also for the UDD algorithm) was observed in Fig. \ref{fig:fig6} where two different initial points resulted in two different objective values upon convergence.
\begin{figure}[t]
\centering
\includegraphics[width=3.in]{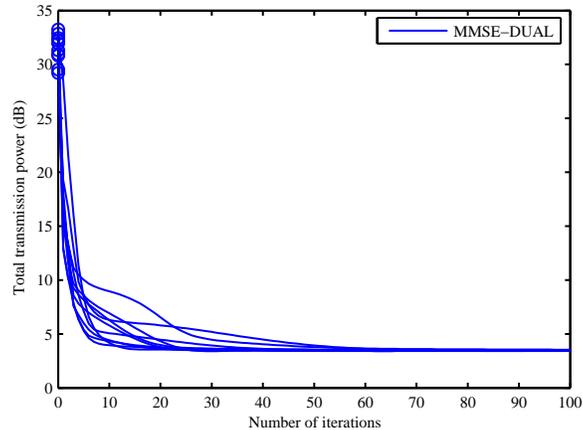}%{MMSE_DUAL_343_gamma10_noise1.eps}
\caption{The MMSE-DUAL algorithm often converge to a same objective value: $K=3$, $M=4$, $N=3$, $\gamma=10$, $\sigma^2=1$.}
\label{fig:fig4}
\end{figure}

\begin{figure}[t]
\centering
\includegraphics[width=3.in]{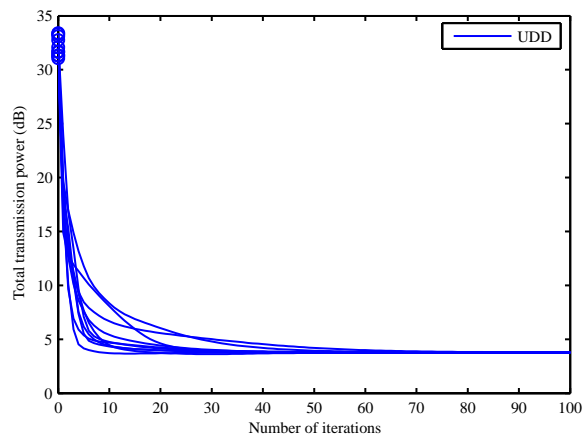}%{UDD_343_gamma10_noise1.eps}
\caption{The UDD algorithm often converge to a same objective value: $K=3$, $M=4$, $N=3$, $\gamma=10$, $\sigma^2=1$.}
\label{fig:fig5}
\end{figure}

\begin{figure}[t]
\centering
\includegraphics[width=3.in]{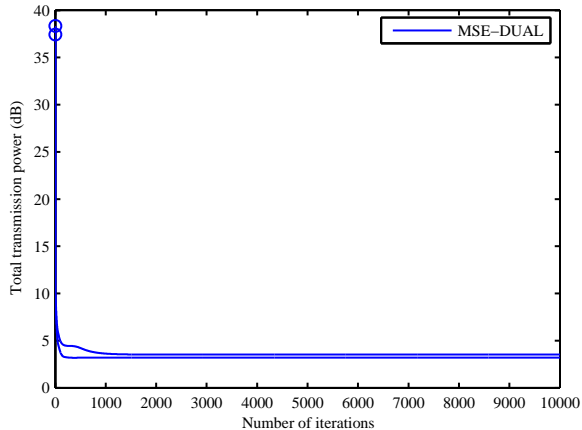}%{MMSE_DUAL_local_converge.eps}
\caption{The MMSE-DUAL algorithm (also the UDD algorithm) may converge to different local solutions depending on the initialization: $K=6$, $M=10$, $N=2$, $\gamma=100$, $\sigma^2=0.1$.}
\label{fig:fig6}
\end{figure}

\section{Conclusion}
In this paper, we have considered the SINR--constrained power minimization problem for MU-MIMO system.
Based on the KKT analysis of the power minimization problem, we propose the MMSE-DUAL algorithm. Although the latter algorithm is in essence the MMSE-SOCP algorithm in \cite{Wong2005}, it connects the MMSE-SOCP algorithm and the UDD algorithm. It is shown that the UDD algorithm\cite{Codreanu2007} also works towards
the KKT condition of the QCPM problem as the MMSE-DUAL/the MMSE-SOCP algorithm. Furthermore, we theoretically
prove that all three algorithms can monotonically converge to a KKT point of the
QCPM problem. Our numerical experiments show that the three algorithms almost always converge to a
same value which may be the optimal value, but local convergence of these algorithms is also observed.

\section{Appendix}
\subsection{The Proof Of Proposition 3}
%First we give a lemma.
%\begin{lemma}
%Problem \eqref{eq:msc_prob2} is equivalent to the following problem
%\begin{equation}
%\label{eq:msc_prob3}
%\begin{split}
%&\min_{\bV} \sum_{k=1}^K \sum_{m=1}^{d_k} \|\bv_{k,m}\|^2\\
%&\st~ \log\det\left(\bI+\bH_k\bv_{k,m}\bv_{k,m}^H\bH_k^H\left(\bOmega_k+\sum_{i=m+1}^{d_k} \bH_k\bv_{k,i}\bv_{k,i}^H\bH_k^H\right)^{-1}\right)\geq \frac{r_k}{d_k},\\
%&~~~~~~ m=1,2,\ldots, d_k, k\in\mathcal{K}
%\end{split}
%\end{equation}
%in the sense that Problem \eqref{eq:msc_prob2} and \eqref{eq:msc_prob3} share the same KKT points w.r.t. $\{\bv_{k,m}\}$.
%\end{lemma}
%{\color{red}[[The proof is again a bit too complex for people to follow. Need to clearly state which step is proving what. I will suggest you write down the kkt for problem \eqref{eq:msc_prob1} first. Then you can sort of see at each step you want to reduce which condition to which.]]}
\begin{proof}
The proof of the second part of Proposition 3 is trivial once the first part is proven. Hence, we here focus on proving the first part of Proposition 3. Let $\lambda_{k,m}$ be the Lagrange multiplier associated with the constraint of problem \eqref{eq:msc_prob2} indexed by $(k,m)$. The KKT condition of Problem \eqref{eq:msc_prob2} is given by
\begin{subequations}
\begin{align}
&\left(\bPsi_{k,m+1}-\frac{1}{\gamma_{k,m}}\bH_k\bv_{k,m}\bv_{k,m}^H\bH_k^H\right)\bu_{k,m}=0,\label{eq:lem-KKT1}\\
&\left(\bI+\sum_{j\neq k}\sum_{i=1}^{d_j}\lambda_{j,i}\bH_j^H\bu_{j,i}\bu_{j,i}^H\bH_j+\sum_{i=1}^{m-1}\lambda_{k,i}\bH_k^H\bu_{k,i}\bu_{k,i}^H\bH_k-\frac{\lambda_{k,m}}{\gamma_{k,m}}\bH_k^H\bu_{k,m}\bu_{k,m}^H\bH_k\right)\bv_{k,m}=0,\label{eq:lem-KKT2}\\
&\frac{|\bu_{k,m}^H\bH_k\bv_{k,m}|^2}{\bu_{k,m}^H\bPsi_{k,m+1}\bu_{k,m}}=\gamma_{k,m},\label{eq:lem-KKT3}\\
&\lambda_{k,m}\geq 0, \forall m, k.
\end{align}
\end{subequations}
In the following, we are going to show that given any tuple $\{\bv_{k,m}, \bu_{k,m}, {\lambda}_{k,m}\}$ satisfying the above system, $\{\bv_{k,m},{\lambda}_{k,m}\}$ must also satisfy the KKT condition for problem \eqref{eq:msc_prob1}. The proof is divided into two steps.

In the first step, we prove that \eqref{eq:lem-KKT2} can be equivalently rewritten as the first--order optimality condition of problem \eqref{eq:msc_prob1} by using \eqref{eq:lem-KKT2} and \eqref{eq:lem-KKT3}, and showing $\lambda_{k,1}=\lambda_{k,2}=\ldots=\lambda_{k,d_k}$, $\forall k$. From \eqref{eq:lem-KKT1}, we infer that $\bu_{k,m}$ must be in the form of MMSE receiver
\begin{equation}
\bu_{k,m}=\beta_{k,m}\bPsi_{k,m+1}^{-1}\bH_k\bv_{k,m}\label{lemma-mmse-receiver}
\end{equation}
where $\beta_{k,m}$ is an arbitrary nonzero scalar.
By taking $\beta_{k,m}=\frac{1}{\sqrt{1+\gamma_{k,m}}}$ and using \eqref{lemma-mmse-receiver}, we have
\begin{equation}\label{eq:lem-key1}
\bH_k^H\bu_{k,m}\bu_{k,m}^H\bH_k=\frac{1}{1+\gamma_{k,m}}\bH_k^H\bPsi_{k,m+1}^{-1}\bH_k\bv_{k,m}\bv_{k,m}^H\bH_k^H\bPsi_{k,m+1}^{-1}\bH_k.
\end{equation}
Moreover, from \eqref{eq:lem-KKT3} and \eqref{lemma-mmse-receiver}, we have
\begin{equation}\label{eq:rate_eq_appendix}
\gamma_{k,m}=\bv_{k,m}^H\bH_k^H\bPsi_{k,m+1}^{-1}\bH_k\bv_{k,m}.
\end{equation}
It follows that
\begin{equation}\label{eq:lem-key2}
\begin{split}
&\bH_k^H\bu_{k,m}\bu_{k,m}^H\bH_k\\
=&\frac{1}{1+\bv_{k,m}^H\bH_k^H\bPsi_{k,m+1}^{-1}\bH_k\bv_{k,m}}\bH_k^H\bPsi_{k,m+1}^{-1}\bH_k\bv_{k,m}\bv_{k,m}^H\bH_k^H\bPsi_{k,m+1}^{-1}\bH_k\\
=&\bH_k^H\left(\bPsi_{k,m+1}^{-1}-\left(\bPsi_{k,m+1}+\bH_k\bv_{k,m}\bv_{k,m}^H\bH_k^H\right)^{-1}\right)\bH_k\\
=&\bH_k^H\left(\bPsi_{k,m+1}^{-1}-\bPsi_{k,m}^{-1}\right)\bH_k
\end{split}
\end{equation}
where in the second equality we have used Woodbury identity\cite{Mtx_book}.
Similarly, we have
\begin{equation}
\label{eq:lem-key3}
\begin{split}
&\bH_k^H\bu_{k,m}\bu_{k,m}^H\bH_k\bv_{k,m}\\
=&\frac{\gamma_{k,m}}{1+\gamma_{k,m}}\bH_k^H\bPsi_{k,m+1}^{-1}\bH_k\bv_{k,m}\\
=&\gamma_{k,m}\left(1-\frac{\gamma_{k,m}}{1+\gamma_{k,m}}\right)\bH_k^H\bPsi_{k,m+1}^{-1}\bH_k\bv_{k,m}\\
=&\gamma_{k,m}\left(1-\frac{\bv_{k,m}^H\bH_k^H\bPsi_{k,m+1}^{-1}\bH_k\bv_{k,m}}{1+\gamma_{k,m}}\right)\bH_k^H\bPsi_{k,m+1}^{-1}\bH_k\bv_{k,m}\\
=&\gamma_{k,m}\left(\bH_k^H\bPsi_{k,m+1}^{-1}\bH_k\bv_{k,m}-\frac{\bH_k^H\bPsi_{k,m+1}^{-1}\bH_k\bv_{k,m}\bv_{k,m}^H\bH_k^H\bPsi_{k,m+1}^{-1}\bH_k\bv_{k,m}}{1+\gamma_{k,m}}\right)\\
=&\gamma_{k,m}\bH_k^H\left(\bPsi_{k,m+1}^{-1}-\frac{\bPsi_{k,m+1}^{-1}\bH_k\bv_{k,m}\bv_{k,m}^H\bH_k^H\bPsi_{k,m+1}^{-1}}{1+\bv_{k,m}^H\bH_k^H\bPsi_{k,m+1}^{-1}\bH_k\bv_{k,m}}\right)\bH_k\bv_{k,m}\\
=&\gamma_{k,m}\bH_k^H\bPsi_{k,m}^{-1}\bH_k\bv_{k,m}
\end{split}
\end{equation}
where we have used the relation $\gamma_{k,m}=\bv_{k,m}^H\bH_k^H\bPsi_{k,m+1}^{-1}\bH_k\bv_{k,m}$ in the first and third equality, and the Woodbury identity in the last equality.

By using \eqref{eq:lem-key1}, we have
\begin{equation}\label{eq:prop-key1}
\sum_{j\neq k}\sum_{i=1}^{d_j}\lambda_{j,i}\bH_j^H\bu_{j,i}\bu_{j,i}^H\bH_j=\sum_{j\neq k}\sum_{i=1}^{d_j}\lambda_{j,i}\frac{1}{1+\gamma_{j,i}}\bH_j^H\bPsi_{j,i+1}^{-1}\bH_j\bv_{j,i}\bv_{j,i}^H\bH_j^H\bPsi_{j,i+1}^{-1}\bH_j,
\end{equation}
Using \eqref{eq:lem-key2}, we get
\begin{equation}\label{eq:prop-key2}
\sum_{i=1}^{m-1}\lambda_{k,i}\bH_k^H\bu_{k,i}\bu_{k,i}^H\bH_k=\sum_{i=1}^{m-1}\lambda_{k,i}\bH_k^H\left(\bPsi_{k,i+1}^{-1}-\bPsi_{k,i}^{-1}\right)\bH_k
\end{equation}
and using \eqref{eq:lem-key3}, we obtain
\begin{equation}\label{eq:prop-key3}
\frac{\lambda_{k,m}}{\gamma_{k,m}}\bH_k^H\bu_{k,m}\bu_{k,m}^H\bH_k\bv_{k,m}=\lambda_{k,m}\bH_k^H\bPsi_{k,m}^{-1}\bH_k\bv_{k,m}
\end{equation}
By defining $$\bUpsilon_{k,i}\triangleq\frac{1}{1+\gamma_{k,i}}\bH_k^H\bPsi_{k,i+1}^{-1}\bH_k\bv_{k,i}\bv_{k,i}^H\bH_k^H\bPsi_{k,i+1}^{-1}\bH_k,$$
and substituting \eqref{eq:prop-key1}, \eqref{eq:prop-key2}, and \eqref{eq:prop-key3} into \eqref{eq:lem-KKT2}, we have
\begin{equation}\label{eq:prop-KKT-1storder}
\left(\bI+\sum_{j\neq k}\sum_{i=1}^{d_j}\lambda_{j,i}\bUpsilon_{j,i}+\sum_{i=1}^{m-1}\lambda_{k,i}\bH_k^H\left(\bPsi_{k,i+1}^{-1}-\bPsi_{k,i}^{-1}\right)\bH_k-\lambda_{k,m}\bH_k^H\bPsi_{k,m}^{-1}\bH_k\right)\bv_{k,m}=0
\end{equation}

By noting that
\begin{equation}
\begin{split}
&\sum_{i=1}^{m-1}\lambda_{k,i}\bH_k^H\left(\bPsi_{k,i+1}^{-1}-\bPsi_{k,i}^{-1}\right)\bH_k-\lambda_{k,m}\bH_k^H\bPsi_{k,m}^{-1}\bH_k\\
=&-\lambda_{k,1}\bH_k^H\bPsi_{k,1}^{-1}\bH_k+\sum_{i=1}^{m-1}(\lambda_{k,i}-\lambda_{k,i+1})\bH_k^H\bPsi_{k,i+1}^{-1}\bH_k,
\end{split}
\end{equation}
we rewrite \eqref{eq:prop-KKT-1storder} as
\begin{equation}\label{eq:prop-KKT-1storder2}
\left(\bI-\lambda_{k,1}\bH_k^H\bPsi_{k,1}^{-1}\bH_k+\sum_{j\neq k}\sum_{i=1}^{d_j}\lambda_{j,i}\bUpsilon_{j,i}+\sum_{i=1}^{m-1}(\lambda_{k,i}-\lambda_{k,i+1})\bH_k^H\bPsi_{k,i+1}^{-1}\bH_k\right)\bv_{k,m}=0.
\end{equation}
Considering \eqref{eq:prop-KKT-1storder2} with $m=1$ and $m=2$, we have
\begin{align}
&\left(\bI-\lambda_{k,1}\bH_k^H\bPsi_{k,1}^{-1}\bH_k+\sum_{j\neq k}\sum_{i=1}^{d_j}\lambda_{j,i}\bUpsilon_{j,i}\right)\bv_{k,1}=0\label{eq:prop701}\\
&\left(\bI-\lambda_{k,1}\bH_k^H\bPsi_{k,1}^{-1}\bH_k+\sum_{j\neq k}\sum_{i=1}^{d_j}\lambda_{j,i}\bUpsilon_{j,i}+(\lambda_{k,1}-\lambda_{k,2})\bH_k^H\bPsi_{k,2}^{-1}\bH_k\right)\bv_{k,2}=0\label{eq:prop702}.
\end{align}
Left-multiplying $\bv_{k,1}^H$ on both sides of \eqref{eq:prop702} and using \eqref{eq:prop701} yeild
$$(\lambda_{k,1}-\lambda_{k,2})\bv_{k,1}^H\bH_k^H\bPsi_{k,2}^{-1}\bH_k\bv_{k,2}=0$$
implying $\lambda_{k,1}=\lambda_{k,2}$ due to $\bv_{k,1}^H\bH_k^H\bPsi_{k,2}^{-1}\bH_k\bv_{k,2}\neq 0$ by assumption. Recursively, we infer from \eqref{eq:prop-KKT-1storder2} that $\lambda_{k,i}=\lambda_{k,i+1}$, by using the condition $\bv_{k,1}^H\bH_k^H\bPsi_{k,i}^{-1}\bH_k\bv_{k,i}\neq 0$, $i=2,3,\ldots,d_k$. Hence, by letting $\lambda_k=\lambda_{k,1}$ for all $k$, and using $\bV_k=[\bv_{k,1}~\bv_{k,2}~\ldots~\bv_{k,d_k}]$, we can compactly write \eqref{eq:prop-KKT-1storder2} as
\begin{equation}\label{eq:msc-pm-KKT-optimality}
\left(\bI-\lambda_{k}\bH_k^H\bPsi_{k,1}^{-1}\bH_k+\sum_{j\neq k}\lambda_{j}\sum_{i=1}^{d_j}\bUpsilon_{j,i}\right)\bV_{k}=0.
\end{equation}

Note that we have
\begin{equation}\label{eq:KKT-47-1}
\begin{split}
&\nabla_{\bV_k}\log\det(\bI+\bH_k\bV_{k}\bV_{k}^H\bH_k^H\bOmega_{k}^{-1})\\
=&\nabla_{\bV_k}\left(\log\det\bPsi_{k,1}-\log\det\bOmega_k\right)\\
=&\nabla_{\bV_k}\log\det\bPsi_{k,1}=\bH_k^H\bPsi_{k,1}^{-1}\bH_k\bV_k
\end{split}
\end{equation}
and
\begin{equation}\label{eq:KKT-47-2}
\begin{split}
&\nabla_{\bV_k}\log\det(\bI+\bH_j\bV_{j}\bV_{j}^H\bH_j^H\bOmega_{j}^{-1})\\
=&\sum_{i=1}^{d_j}\nabla_{\bV_k}\log\det(\bI+\bH_j\bv_{j,i}\bv_{j,i}^H\bH_j^H\bPsi_{j,i+1}^{-1})=-\sum_{i=1}^{d_j}\bUpsilon_{j,i}\bV_k.
\end{split}
\end{equation}
Substituting \eqref{eq:KKT-47-1} and \eqref{eq:KKT-47-2} into \eqref{eq:msc-pm-KKT-optimality} yields
\begin{equation}\label{eq:KKT-47}
\bV_k-\lambda_k\nabla_{\bV_k}\log\det(\bI+\bH_k\bV_{k}\bV_{k}^H\bH_k^H\bOmega_{k}^{-1})-\sum_{j\neq k}\lambda_j\nabla_{\bV_k}\log\det(\bI+\bH_j\bV_{j}\bV_{j}^H\bH_j^H\bOmega_{j}^{-1})=0
\end{equation}
which is the first--order optimality condition of problem \eqref{eq:msc_prob1}.

In the second step, we prove by using \eqref{eq:rate_eq_appendix} that $\{\bv_{k,m}\}$ satisfies the constraints of problem \eqref{eq:msc_prob1} with equality. Since $\gamma_{k,m}= 2^{\frac{r_{k}}{d_{k}}}-1$, $m=1,2,\ldots, d_k$, $\forall k$, we have
\begin{equation}\label{eq:prop-rate}
\begin{split}
&\log\det(\bI+\bH_k\bv_{k,m}\bv_{k,m}^H\bH_k\bPsi_{k,m+1}^{-1})\\
=&\log\det(1+\bv_{k,m}^H\bH_k\bPsi_{k,m+1}^{-1}\bH_k\bv_{k,m})\\
=&\log\det(1+\gamma_{k,m})=\frac{r_k}{d_k}.
\end{split}
\end{equation}
where we use the identity $\det(\bI+\bA\bB)=\det(\bI+\bB\bA)$\cite{Mtx_book} in the first equality and \eqref{eq:rate_eq_appendix} in the second equality. By summing \eqref{eq:prop-rate} over $m=1,2,\ldots, d_k$, we obtain
\begin{equation}\label{eq:prop-KKT-feasibility}
\sum_{m=1}^{d_k}\log\det(\bI+\bH_k\bv_{k,m}\bv_{k,m}^H\bH_k\bPsi_{k,m+1}^{-1})=\log\det(\bI+\bH_k\bV_{k}\bV_{k}^H\bH_k\bOmega_{k}^{-1})=r_k
\end{equation}

Combining \eqref{eq:KKT-47}, \eqref{eq:prop-KKT-feasibility}, and together with $\lambda_k\geq 0$ $\forall k$, we concludes that $\{\bv_{k,m}\}$ and $\{\lambda_k\}$ satisfy the KKT condition of problem \eqref{eq:msc_prob1}.
\end{proof}

%\bibliographystyle{IEEEbib}

%{
%\bibliography{ref}
%%\bibliography{biblio}
%}
\newpage
\section{A Counter Example}
In the proof of Proposition \ref{prop:prop3}, the assumption $\bv_{k,1}^H\bH_k^H\bPsi_{k,m}^{-1}\bH_k\bv_{k,m}\neq 0$, $m=2,3,\ldots, d_k$, $k\in\mathcal{K}$ is used to obtain $\lambda_{k,1}=\lambda_{k,2}=\ldots=\lambda_{k,d_k}$, $k\in\mathcal{K}$, which finally leads to Part 1) of Proposition \ref{prop:prop3}. Intuitively, it may hold that $\lambda_{k,1}=\lambda_{k,2}=\ldots=\lambda_{k,d_k}$ for the case when $\gamma_{k,1}=\gamma_{k,2}=\ldots=\gamma_{k,d_k}$, even if the assumption does not hold. However, it is worthy mentioning that, if the assumption is relaxed, there indeed exists some counter example as shown below, where, $\lambda_{k,1}=\lambda_{k,2}=\ldots=\lambda_{k,d_k}$, $k\in\mathcal{K}$ may not hold when $\gamma_{k,1}=\gamma_{k,2}=\ldots=\gamma_{k,d_k}$, $k\in\mathcal{K}$ and as a result the KKT point $\{\bv_{k,m}\}$ of problem \eqref{eq:msc_prob2} is not a KKT point of problem \eqref{eq:msc_prob1}.

%Here is a counter example to show that $\lambda_{k,i}$, $i=1,2,\ldots, d_k$, may not be equal when $\gamma_{k,i}$, $i=1,2,\ldots, d_k$, are equal.
Let us consider the special case---the point to point MIMO system with two streams. In this case, the rate-constrained power minimization problem \eqref{eq:msc_prob1} boils down to
\begin{equation}\label{eq:example}
\begin{split}
&\min_{\bV} \trace(\bV\bV^H)\\
&\st~\log\det(\bI+\bH\bV\bV^H\bH)\geq r.
\end{split}
\end{equation}
With equal rate allocation for the two streams, the optimal solution to problem \eqref{eq:example} can be found by solving the following problem
\begin{equation}\label{eq:example2}
\begin{split}
&\min_{\bu_1,\bu_2, \bv_1, \bv_2} \Vert\bv_1\Vert^2+\Vert\bv_2\Vert^2\\
&\st~\frac{\vert\bu_1^H\bH\bv_1\vert^2}{\bu_1^H(\bI+\bH\bv_2\bv_2^H\bH^H)\bu_1}\geq \gamma_1,\\
&~~~~~~\frac{\vert\bu_1^H\bH\bv_2\vert^2}{\bu_1^H\bu_1}\geq \gamma_2
\end{split}
\end{equation}
where $\gamma_1=\gamma_2=e^{\frac{r}{2}}-1$.

Let $\lambda_1$ and $\lambda_2$ be the Lagrange multiplier associated with the first and second constraint of problem \eqref{eq:example2}, respectively. In the following, we show that there may exist the case $\lambda_1\neq \lambda_2$ when $\gamma_1=\gamma_2$. The KKT system of problem \eqref{eq:example2} is given by
\begin{align}
&\left(\bI+\bH\bv_2\bv_2^H\bH^H-\frac{1}{\gamma_1}\bH\bv_1\bv_1^H\bH^H\right)\bu_1=0\label{exkkt1}\\
&\left(\bI-\frac{1}{\gamma_2}\bH\bv_2\bv_2^H\bH^H\right)\bu_2=0\label{exkkt2}\\
&\left(\bI-\frac{\lambda_1}{\gamma_1}\bH^H\bu_1\bu_1^H\bH\right)\bv_1=0\label{exkkt3}\\
&\left(\bI+\lambda_1\bH^H\bu_1\bu_1^H\bH-\frac{\lambda_2}{\gamma_2}\bH^H\bu_2\bu_2^H\bH\right)\bv_2=0\label{exkkt4}\\
&\frac{\vert\bu_1^H\bH\bv_1\vert^2}{\bu_1^H(\bI+\bH\bv_2\bv_2^H\bH^H)\bu_1}=\gamma_1\label{exkkt5}\\
&\frac{\vert\bu_2^H\bH\bv_2\vert^2}{\bu_2^H\bu_2}=\gamma_2\label{exkkt6}\\
&\lambda_1,\lambda_2\geq 0\label{exkkt7}
\end{align}

%The counter example is inspired by the assumption made in Proposition 3.
For the special case, we relax the assumption, i.e., let $\bv_1^H\bH^H\bH\bv_2=0$, which holds if $\bv_1$ and $\bv_2$ are two orthogonal eigenvectors of $\bH^H\bH$. In the following, we let $\bv_1$ and $\bv_2$ be two orthogonal eigenvectors of $\bH^H\bH$, such that $\bv_2^H\bH^H\bH\bv_2=\gamma_2$ and $\bv_1^H\bH^H\left(\bI+\bH\bv_2\bv_2^H\bH^H\right)^{-1}\bH\bv_1=\gamma_1$. Hence, we have $\bH^H\bH\bv_1=\mu_1\bv_1$ and $\bH^H\bH\bv_2=\mu_2\bv_2$ where $\mu_1$ and $\mu_2$ are the two corresponding eigenvalues of $\bH^H\bH$. Furthermore, we let
\begin{equation}\label{eq:uu1}
\bu_1=\frac{1}{\sqrt{1+\gamma_1}}\left(\bI+\bH\bv_2\bv_2^H\bH^H\right)^{-1}\bH\bv_1,
\end{equation}
\begin{equation}\label{eq:uu2}\bu_2=\frac{1}{\sqrt{1+\gamma_2}}\bH\bv_2,\end{equation}
and
\begin{equation}\label{exlambda}
\lambda_1=\frac{1+\gamma_1}{\mu_1}~ \textrm{and}~ \lambda_2=\frac{1+\gamma_2}{\mu_2}.
\end{equation}

Now we are ready to show that $(\bu_1,\bv_1,\bu_2,\bv_2,\lambda_1,\lambda_2)$ defined above satisfies the KKT system (\ref{exkkt1}-\ref{exkkt7}). Let us start with examining \eqref{exkkt1} and \eqref{exkkt2}.
First, since $\bv_1^H\bH^H\bH\bv_2=0$, we have $\left(\bI+\bH\bv_2\bv_2^H\bH^H\right)^{-1}\bH\bv_1=\bH\bv_1$. Thus, \begin{equation}\label{eq:uuu1}\bu_1=\frac{1}{\sqrt{1+\gamma_1}}\bH\bv_1,\end{equation} which is clearly orthogonal to $\bu_2$ and $\bH\bv_2$ due to $\bv_1^H\bH^H\bH\bv_2=0$. It follows that
\begin{equation}
\begin{split}
&\left(\bI+\bH\bv_2\bv_2^H\bH^H-\frac{1}{\gamma_1}\bH\bv_1\bv_1^H\bH^H\right)\bu_1\\
%=&\bu_1-\frac{1}{\gamma_1}\bH\bv_1\bv_1^H\bH^H\bu_1\\
=&\bu_1-\frac{1}{\sqrt{1+\gamma_1}}\frac{1}{\gamma_1}\bH\bv_1\bv_1^H\bH^H\left(\bI+\bH\bv_2\bv_2^H\bH^H\right)^{-1}\bH\bv_1\\
=&\bu_1-\frac{1}{\sqrt{1+\gamma_1}}\bH\bv_1=0
\end{split}
\end{equation}
where the first equality is due to $\bv_2^H\bH^H\bu_1=0$ and \eqref{eq:uu1}, the second equality is due to\\ $\bv_1^H\bH^H\left(\bI+\bH\bv_2\bv_2^H\bH^H\right)^{-1}\bH\bv_1=\gamma_1$, and the last equality is due to
\eqref{eq:uuu1}. Thus, \eqref{exkkt1} follows. Similarly, we can verify that $\bu_2$ and $\bv_2$ satisfy \eqref{exkkt2}.

Next let us check \eqref{exkkt3} and \eqref{exkkt4}.
First, due to \eqref{eq:uu1} and $\bv_1^H\bH^H\left(\bI+\bH\bv_2\bv_2^H\bH^H\right)^{-1}\bH\bv_1=\gamma_1$, we have $\bu_1^H\bH\bv_1=\frac{\gamma_1}{\sqrt{1+\gamma_1}}$. It follows that
\begin{equation}
\begin{split}
&\left(\bI-\frac{\lambda_1}{\gamma_1}\bH^H\bu_1\bu_1^H\bH\right)\bv_1\\
&=\bv_1-\frac{\lambda_1}{\sqrt{1+\gamma_1}}\bH^H\bu_1\\
&=\bv_1-\frac{\lambda_1}{1+\gamma_1}\bH^H\bH\bv_1=0
%&=\bv_1-\frac{1}{\mu_1}\bH^H\bH^H\bv_1=0
\end{split}
\end{equation}
where the second equality is due to \eqref{eq:uuu1} and the last equality follows from $\lambda_1=\frac{1+\gamma_1}{\mu_1}$ and  $\bH^H\bH\bv_1=\mu_1\bv_1$. Thus, \eqref{exkkt3} follows. Similarly we can verify \eqref{exkkt4}.

Finally, it is easy to verify that \eqref{exkkt5}, \eqref{exkkt6}, and \eqref{exkkt7} hold true for the defined $(\bu_1,\bv_1,\bu_2,\bv_2,\lambda_1,\lambda_2)$.

Now we are ready to draw the conclusion. From \eqref{exlambda}, it is readily known that, when $\bH^H\bH$ has two different eigenvalues $\mu_1$ and $\mu_2$, we have $\lambda_1\neq\lambda_2$  even if $\gamma_1=\gamma_2$.
\end{document}